\def\note#1{}
\newcommand\R{{\mathbb R}}
\newcommand\p{{\mathbb Pr}}
\def\B{\{0,1\}}
\def\epsilon{\varepsilon}
\def\phi{\varphi}
\newtheorem*{theorem*}{Theorem}
\newtheorem{theorem}{Theorem}
\newtheorem{lemma}[theorem]{Lemma}
\newtheorem{corollary}[theorem]{Corollary}
\newtheorem{proposition}[theorem]{Proposition}
\newcommand{\copyableTheorem}[2]{
 \newtheorem*{newthm#1}{Theorem \ref{thm#1}}
 \begin{newthm#1}
   {#2}
 \end{newthm#1}
 \expandafter\newcommand\expandafter{\csname thm#1\endcsname}{
   \begin{theorem}
     \label{thm#1}
     {#2}
   \end{theorem}
 }
}
\renewcommand{\P}[1]{{\mathbb{P}}\left[{#1}\right]}
\newcommand{\E}[1]{{\mathbb{E}}\left[{#1}\right]}
\newcommand{\CondE}[2]{{\mathbb{E}}\left[{#1}\middle\vert{#2}\right]}
\newcommand{\Var}[1]{{\mbox{Var}}\left[{#1}\right]}
\newcommand{\Cov}[1]{{{\mbox{Cov}}\left[{#1}\right]}}
\def\bp{{\bf p}}
\def\bfy{{\bf y}}
\def\br{{\bf r}}
\def\I{{\bf I}}
\def\M{{\bf m}}
\def\D{{\bf d}}
\def\A{{\bf A}}
\def\T{{\bf T}}
\def\Y{{\bf y}}
\def\C{{\bf C}}
\def\P{{\bf P}}
\def\Q{{\bf Q}}
\def\W{{\bf W}}
\def\q{{\bf q}}
\def\B{{\bf b}}
\def\mZ{{\bf Z}}
\newcommand{\tr}{\top}
\newcommand{\ones}{{\bf 1}}
\newcommand{\diag}{{\mbox{Diag}}}
\begin{document}
\title{Social Learning in a Changing World} \author{Rafael
  M. Frongillo\footnote{University of California Berkeley, Computer
    Science Department.  Supported by the National Defense Science \&
    Engineering Graduate Fellowship (NDSEG) Program.}~, Grant
  Schoenebeck\footnote{Princeton University, Computer Science
    Department.  This work was supported by Simons Foundation
    Postdoctoral Fellowship and National Science Foundation Graduate
    Fellowship.} and Omer Tamuz\footnote{Weizmann Institute. Supported
    by ISF grant 1300/08. Omer Tamuz is a recipient of the Google
    Europe Fellowship in Social Computing, and this research is
    supported in part by this Google Fellowship.}} \date{\today}
\maketitle

\begin{abstract}
  We study a model of learning on social networks in dynamic
  environments, describing a group of agents who are each trying to
  estimate an underlying state that varies over time, given access to
  weak signals and the estimates of their social network neighbors.

  We study three models of agent behavior. In the \emph{fixed response}
  model, agents use a fixed linear combination to incorporate
  information from their peers into their own estimate. This can be
  thought of as an extension of the DeGroot model to a dynamic
  setting.  In the \emph{best response} model, players calculate minimum
  variance linear estimators of the underlying state.

  We show that regardless of the initial configuration, fixed response
  dynamics converge to a steady state, and that the same holds for
  best response on the complete graph. We show that best response
  dynamics can, in the long term, lead to estimators with higher
  variance than is achievable using well chosen fixed responses.

  The \emph{penultimate prediction} model is an elaboration of the best
  response model. While this model only slightly complicates the
  computations required of the agents, we show that in some cases it
  greatly increases the efficiency of learning, and on complete
  graphs is in fact optimal, in a strong sense.

\end{abstract}

\section{Introduction}
\label{'section introduction'}
The past three decades have witnessed an immense effort by the
computer science and economics communities to model and understand
people's behavior on social networks~\cite{Jackson:06}. A particular
goal has been the study of how people share information and learn from
each other; learning from peers has been repeatedly shown to be a
driving force of many economic and social processes
(cf.~\cite{conley2001social, bandiera2006social, kohler1997learning,
  besley1994diffusion}).

\subsection{Classical approaches and results}

Early work by DeGroot~\cite{degroot1974reaching} considered a set of
agents, connected by a social network, that each have a prior belief:
a distribution over the possible values of an {\em underlying state of
  the world} - say the market value of some company. The agents
iteratively observe their neighbors' beliefs and update their own by
averaging the distributions of their neighbors. Since DeGroot, a
plethora of models for social learning have been proposed and studied.

DeGroot's simple averaging of neighbors' beliefs may seem naive and
arbitrary; economists often opt for {\em rational} models instead.
In rational models the agents update their belief not by a fixed rule,
but in an attempt to maximize a utility function. It is often assumed
that agents are {\em Bayesian}: they assume some prior distribution on
the underlying state and on other agents' behavior, have access to
some observations, and maximize the expected value of their utility,
using Bayes' Law. Bayesian social learning has a wide literature, with
noted work by Aumann~\cite{aumann1976agreeing} and the related {\em
  common knowledge} work (cf.~\cite{geanakoplos1992common}), as well
as McKelvey and Page~\cite{mckelvey1986common}, Parikh and
Krasucki~\cite{parikh1990communication}, Bala and
Goyal~\cite{BalaGoyal:96}, Gale and Kariv~\cite{GaleKariv:03}, and
many others.

Aumann~\cite{aumann1976agreeing} and
Geanakoplos~\cite{geanakoplos1982we} show that a group of Bayesian
agents, who each have an initial estimate of an {\em underlying
  state}, and repeatedly announce their estimate (in particular,
expected value) of this state, will eventually converge to the same
estimate. McKelvey and Page~\cite{mckelvey1986common} extend this
result to processes in which ``survey results'', rather than all the
estimates, are repeatedly shared. The social network in these models
is the {\em complete network}; indeed, it seems that non-trivial
dynamics and results are achieved already for this (seemingly simple)
topology. Aaronson~\cite{Aaronson:05} studies the complexity of the
computations required of the agents, again with highly non-trivial
results.

\subsection{Rationality and bounded rationality}
The term {\em rational} in economic theory refers to any behavior that
maximizes (or even attempts to maximize) some utility function. This
is in contrast to, for example, behavior that is heuristic or
fixed. Bayesian rationality optimizes in a probabilistic framework
that includes a prior and observations, and is, as mentioned above, a
commonly used paradigm.

The disadvantage of fully rational, Bayesian models is that the
calculations required of the agents can very quickly become
intractable, making their applicability to real-world settings
questionable; this tension between rationality and tractability is an
old recurring theme in behavioral economics models
(cf.~\cite{Simon:82}).

A solution often advocated is {\em bounded rationality}.  Agents still
act optimally in bounded rationality models, but only optimize with
respect to a restricted set of choices. This usually simplifies the
optimization problem that needs to be solved. For example, agents may
be required to disregard some of their available information or be
restricted in the manner that they calculate their strategy.  In
addition to serving the goal of more realistically modeling agents, a
usual added benefit of bounded rationality is that the analysis of the
model becomes easier.  We too follow this course.

A standard assumption in this literature is that ``actions speak
louder than words'' (cf. Smith and
S{\o}rensen~\cite{smith2000pathological}); agents do not participate
in a communication protocol intended to optimize the exchange of
information, but rather make inferences about each others' private
information by observing actions. For example, by observing the price
at which a person bids for a stock one may learn her estimate for the
future price, but yet not learn all of the information which she used
to arrive at this estimate.

\subsection{Informal statement of the model}
We consider a model where the underlying state $S$ is not a constant
number - as it is in all of the above mentioned models - but changes
with time, as prices and other economic quantities tend to do. In
particular we assume that the state $S=S(t)$ performs a random walk;
$S(0)$ is picked from some distribution, and at each iteration an
i.i.d.\ random variable is added to it.

The process commences with each agent having some estimator of
$S(0)$. We make only very weak assumptions about the joint
distribution of these estimators. Then, at each discrete time period
$t$, each agent receives an independent (and identical over time)
measurement of $S(t)$, and uses it to update its estimator. Also
available to it are the previous {\em estimates} of its neighbors on a
social network. Thus social network neighbors share their beliefs (or
rather, observe each others' actions), and information propagates
through the network.

While conceivably agents could optimally use all the information
available to them to estimate the underlying state, it appears that
such calculations are extremely complex. Instead, we explore {\em
  bounded rationality} dynamics, assuming that agents are restricted
to calculating linear combinations of their observations.  We note
that if the random walk and the measurements are taken to be Gaussian,
then the minimum variance unbiased linear estimator (MVULE) is also
the maximum likelihood estimator. A Gaussian random walk is a good
first-order approximation for many Economic processes (cf. classical
work by Bachelier~\cite{bachelier1900theorie}).

For the first part of the paper, we also require that these linear
combinations only involve the agents' neighbors' estimates from the
previous time period (and not earlier), as well as their new
measurement.  In the last model we slightly relax this
requirement.

We consider three models.  In the {\em fixed response} model each
agent, at each time period, estimates the underlying state by a {\em
  fixed} linear combination of its new measurement and the estimates
of its neighbors in the previous period. This is a straightforward
extension of the DeGroot model to our setting.

In the {\em best response model}, at each iteration, each agent
calculates the MVULE of the underlying state, based on its peers'
estimate from the previous round, together with its new
measurement. We assume here that at each iteration the agents know the
covariance matrix of their estimators. While this may seem like a
strong assumption, we note that, under some elaboration of our model,
this covariance matrix may be estimated by observing the process for
some number of rounds before updating one's estimator. Furthermore, it
seems that assumptions in this spirit - and often much stronger
assumptions - are necessary in order for agents to perform any kind of
optimization. For example, it is not rare in the literature of social
Bayesian learning to assume that the agents know the structure of the
entire social network graph (e.g.,~\cite{GaleKariv:03,
  parikh1990communication, AcemMuntDahlLobelOzd:08}).

Finally, we introduce the {\em penultimate prediction} model, which is
a simple extension of the best response model, additionally allowing
the agents to remember exactly one value from one round to the
next. While only slightly increasing the computational requirements on
the agents, this model exhibits a sharp increase in learning
efficiency.

\subsection{Informal statement of results}

While our long term goal is to understand this process on general
social network graphs, we focus in this paper on the {\em complete
  network}, which already exhibits mathematical richness.

We consider the system to be in a \emph{steady state} when the
covariance matrix of the agents' estimators is constant.  On general
graphs we show that fixed response dynamics converge to a steady
state.  On the complete graph we show that best response dynamics also
converge to a steady state. Both of these results hold regardless of
the initial conditions (i.e., the agents' estimators at time $t=0$).

We show that the steady state of best response dynamics is not
necessarily optimal; there exist fixed response dynamics in which the
agents converge to estimators which all have lower variance then the
estimators of the steady state of  best response dynamics.  This
shows that every agent can do better than the result of best response
by following a socially optimal rule; thus a certain {\em price of
  anarchy} is to be paid when agents choose the action that maximizes
their short term gain.

Finally, we show that in the penultimate prediction model, for the
complete graph, the agents learn estimators which are the optimal (in
the minimum variance sense) amongst all linear estimators, and thus
outperform those of fixed and best response dynamics.

We define a notion of ``socially asymptotic learning'': A model has
this property when the variance of the agents' steady-state estimators
tends towards the information-theoretical optimum with the number of
agents.  We show that the penultimate prediction model exhibits
socially asymptotic learning on the complete graph, while best
response and fixed response dynamics fail to do the same.

\section{Previous work}
Our model is an elaboration of models studied by DeMarzo, Vayanos and
Zwiebel~\cite{DeMarzo:03}, as well as Mossel and
Tamuz~\cite{mossel2010iterative, MosselTamuz10:arxiv}. There, the
state $S$ is a fixed number picked at time $t=0$, and each agent
receives a single measurement of it. The process thereafter is
deterministic, with each agent, at each iteration, recalculating its
estimate of $S$ based on its observation of its neighbors' estimates.

In~\cite{MosselTamuz10:arxiv} it is shown that if the agents calculate
the minimum variance unbiased linear estimator (MVULE) at every turn
(remembering all of their observations) then all the agents converge
to the optimal estimator of $S$, i.e. the average of the original
measurements. Furthermore, this happens in time that is at most $n
\cdot d$, where $d$ is the diameter of the graph.

When agents calculate estimates that are only based on their
observations from the previous round, then they do not necessarily
converge to the optimal estimator~\cite{mossel2010iterative}. In fact,
it is not known whether they converge at all.

A similar model is studied by Jadbabaie, Sandroni and
Tahbaz-Salehi~\cite{RePEc:pen:papers:10-005}. They explore a bounded
rationality setting in which agents receive new signals at each
iteration. An agent's private signals may be informative only when
combined with those of other agents, and yet their model achieves
efficient learning.

Our model is a special case of a model studied by Acemoglu, Nedic, and
Ozdaglar~\cite{acemoglu2008convergence}. They extend these models by
allowing the state to change from period to period. They don't require
the change in the state to be i.i.d, but only to have zero mean and be
independent in time. Their agents also receive a new, independent
measurement of the state at every period, which again need not be
identically distributed. They focus on a different regime than the one
we study; their main result is a proof of convergence in the case that
the variations in the state diminish with time, with variance tending
to zero.

In our model the change in the underlying state has constant variance,
as does the agents' measurement noise. This allows us to explore
steady states, in which the covariance matrix of the agents'
estimators does not change from iteration to iteration.

Our model is non-trivial already for a single agent, although here a
complete solution is simple, and can be calculated using tools
developed for the analysis of {\em Kalman
  filters}~\cite{kalman1960new}.

\section{Notation, formal models, and results}
\label{'section definitions'}
Let $[n] = \{1, 2, \ldots, n\}$ be a set of agents. Let $G=([n],E)$ be
a directed graph representing the agents' social network. We denote by
$\partial i = \{j | (i,j) \in E\}$ the neighbors of $i$, and assume
that always $i \in \partial i$.

We consider discrete time periods $t \in \{0,1,\ldots\}$. The {\em
  underlying state of the world} at time $t$, $S(t)$, is defined as
follows. $S(0)$ is a real random variable with arbitrary distribution,
and for $t>0$
\begin{align}
  \label{eq:s-def}
  S(t) = S(t-1) + X(t-1),
\end{align}
where $\E{X(t)}=0$, $\Var{X(t)}=\sigma^2$, and $\sigma$ is a parameter
of the model. The random variables $X(0),X(1),\ldots$ are
independent. Hence the underlying state $S(t)$ performs a random walk
with zero mean and standard deviation $\sigma$.

At time $t=0$ each agent $i$ receives $Y_i(0)$, an estimator of
$S(0)$. The only assumptions we make on their joint distribution is
that $\CondE{Y_i(0)}{S(0)}=S(0)$, i.e. the estimators are unbiased, and
that $\Var{Y_i(0)-S(0)}$ is finite for all $i$.

At each subsequent period $t > 0$, each agent $i$ receives $M_i(t)$,
an independent measurement of $S(t)$, defined by
\begin{align}
  \label{eq:m-def}
  M_i(t) = S(t) + D_i(t),
\end{align}
where $\E{D_i(t)}=0$, $\Var{D_i(t)}=\tau_i^2$, and the $\tau_i$'s are
parameters of the model.  Hence $D_i(t)$ is the measurement error of
agent $i$ at time $t$. Again, the random variables $D_i(t)$ are
independent.

At each period $t > 0$, each agent $i$ calculates $Y_i(t)$, agent
$i$'s estimate of $S(t)$, using the information available to it.
Precisely what information is available varies by the model (and is
defined below), but in all cases $Y_i(t)$ is a (deterministic) convex
linear combination of agent $i$'s measurements up to and including
time $t$, $\{M_i(t')|t' \leq t\}$, as well as the previous {\em
  estimates} of its social network neighbors, $\{Y_j(t')|t' < t, j
\in \partial i\}$.  Additionally, in the penultimate prediction model,
at each round $t$ each agent computes a value $R_i(t)$, and at round
$t+1$ uses this value to compute $R_i(t+1)$ and $Y_i(t+1)$.  Like
$Y_i(t)$, $R_i(t)$ is also a convex linear combination of the same
random variables.

In general, we shall assume that the agents are interested in
minimizing the expected squared error of their estimators,
$\E{(Y_i(t)-S(t))^2}$; assuming $Y_i(t)$ is unbiased (i.e.,
$\CondE{Y_i(t)}{S(t)}=S(t)$) this is equivalent to minimizing
$\Var{Y_i(t)-S(t)}$, which we refer to as the \emph{``variance of the
  estimator $Y_i(t)$.''} We shall assume throughout that the
estimators $Y_i(t)$ are indeed unbiased; we elaborate on this in the
definitions of the models below.

We shall (mostly) restrict ourselves to the case where the agents use
only their neighbors' estimates from the previous iteration, and not
from the ones before it. In these cases we write
\begin{align}
  \label{eq:y-def}
  Y_i(t) = A_i(t)M_i(t)+\sum_jP_{ij}(t) Y_j(t-1).
\end{align}
for some $A_i(t)$ and $P_{ij}(t)$ such that $P_{ij}=0$ whenever $j \not
\in \partial i$.

We will find it convenient to express such quantities in matrix form.
To that end we let $\M(t),\Y(t),\D(t)\in\R^n$ be column vectors with
entries $M_i(t), Y_i(t), D_i(t)$, and let
$\A(t),\P(t),\T\in\R^{n\times n}$ be the weight matrices, with $\A(t)
= \diag(A_1(t),\ldots,A_n(t))$, $\P = (P_{ij})_{ij}$ and
$\T=\Var{\D(t)}=\diag(\tau_1^2,\ldots,\tau_n^2)$.  Using this notation
Eq.~\eqref{eq:y-def} becomes
\begin{align}
  \label{eq:y-def-vec}
  \Y(t) = \A(t) \M(t) + \P(t) \Y(t-1).
\end{align}

We will also make use of the {\em covariance matrix}
\begin{align}
  \label{eq:C-def}
  \C(t) = \Var{\Y(t)-\ones S(t)},
\end{align}
where $\ones \in \R^n$ denotes the column vector of all ones. Hence
$C_{ij}(t) = \Cov{Y_i(t)-S(t),Y_j(t)-S(t)}$, which we refer to as the
{\em ``covariance of the estimators $Y_i(t)$ and $Y_j(t)$.''}

\subsection{Dynamics models}
\label{'section-models'}

\subsubsection{Best response}
\label{subsec:best-response}
The main model we study is the best response dynamics.  Here we assume
that at round $t$, each agent $i$ has access to $M_i(t)$, $\Y(t-1)$
and the covariance matrix for these values.  At each iteration $t$,
agent $i$ picks $A_i(t)$ and $\{P_{ij}(t)\}_j$ that minimize
$C_{ii}(t)=\Var{Y_i(t)-S(t)}$, under the constraints that (a) $P_{ij}(t)$ may be
non-zero only if $j \in \partial i$, and (b) $\CondE{Y_i(t)}{S(t)}=S(t)$,
i.e. $Y_i(t)$ is an unbiased estimator of $S(t)$.  In
Section~\ref{section:understanding-best-response} we show that these
minimizing coefficients are a deterministic function of $\C(t-1)$,
$\sigma$ and $\{\tau_i\}$. Hence we assume here that the agents know
these values.  By this definition $Y_i(t)$ is the {\em minimum variance unbiased
  linear estimator} (MVULE) of $S(t)$, given $M_i(t)$ and $\Y(t-1)$.

Note that it follows from our definitions that if the estimators
$\{Y_i(t-1)\}$ at time $t-1$ are unbiased then, in order for the
estimators at time $t$ to be unbiased, it must be the case that
\begin{align}
  \label{eq:convex}
  A_i(t)+\sum_jP_{ij}(t)=1.
\end{align}
Since at time zero the estimators are unbiased then it follows by
induction that Eq.~\eqref{eq:convex} hold for all $t>0$.

\subsubsection{Fixed response}

We shall also consider the case of estimators which are {\em fixed}
linear combinations of the agent's new measurement $M_i(t)$ and its
neighbors' estimators at time $t-1$. These we call fixed response
estimators. In this case we would have, using our matrix notation:
\begin{align}
  \label{eq:y-def-fixed}
  \Y(t) = \A \M(t) + \P \Y(t-1).
\end{align}
The matrices $\A$ and $\P$ are arbitrary matrices that satisfy the
following conditions: (a) $P_{ij}$ is positive and non-zero only if $j
\in \partial i$, and (b) $\Y_i(t)$ is a convex linear combination of
$M_i(t)$ and $\{Y_j(t-1)\}_j$. Equivalently,
$\A_i+\sum_j{P_{ij}}=1$, which is the same condition described in
Equation~\eqref{eq:convex}.

\subsubsection{Penultimate prediction}
Finally, we consider the penultimate prediction model where each
agent $i$ can remember one value, which we denote $R_i(t)$, from one
round $t$ to the next round $t+1$.  We assume that at round $t$,
each agent $i$ has access to $M_i(t)$, $\Y(t-1)$, $R_i(t-1)$ and the
covariance matrix for these values. We denote $\br(t) =
(R_1(t),\ldots,R_n(t))$.

We fix $R_i(0)= 0$, and let $R_i(t)$ be agent $i$'s MVULE of $S(t-1)$,
given $R_i(t-1)$ and $\bfy(t-1)$ (note that this is in general {\em not}
equal to $Y_i(t-1)$). $Y_i(t)$ now becomes the MVULE of $S(t)$ given
$R_i(t)$ and $M_i(t)$.

\subsection{Steady states and efficient learning}

We say that the system converges to a {\em steady state} $\C$ when
\begin{align*}
  \lim_{t \to \infty}\C(t) = \C.
\end{align*}

Assuming that agents are constrained to calculating linear
combinations of their measurements and neighbors' estimators, the
variance of the estimators $Y_i(t)$ of $S(t)$ at time $t$ can be bounded
from below by the variance of $Z_i(t)$, where we define $Z_i(t)$ to be the MVULE of $S(t)$ given the initial
estimators $\bfy(0)$, all measurements up to time $t-1$ $\{M_j(s) | j
\in [n], s < t\}$ and $M_i(t)$. We therefore define that a process
achieves {\em perfect learning} when $\Var{Y_i(t)-S(t)} = \Var{Z_i(t)
  - S(t)}$. Note that this definition is a natural one for the
complete graph and should be altered for general networks, where a
tighter lower bound exists.


If an agent were to know $S(t-1)$ exactly at time $t$, then, together
with $M_i(t)$, its minimum variance unbiased linear estimator for
$S(t)$ would be a linear combination of just $S(t-1)$ and $M_i(t)$,
because of the Markov property of $S(t)$. In this case it is easy to
show (see Proposition~\ref{prop:minimum-var-est}) that $C_{ii}(t) =
\Var{Y_i(t) - S(t)}$ would equal
$\sigma^2\tau_i^2/(\sigma^2+\tau_i^2)$. We say that a model achieves
{\em socially asymptotic learning} if for $n$ sufficiently large, as
the number of agents tends to infinity, the steady state $\C$ exists
and $C_{ii}$ tends to $\sigma^2\tau_i^2/(\sigma^2+\tau_i^2)$ for all
$i$.  We stress that this definition only makes sense in models where
the number of agents $n$ grows to infinity and therefore is
incomparable to perfect learning, which is defined for a particular
graph.


\section{Statement of the main results}
The following are our main results.  Let $\beta(t) = 1/(\ones^\tr \C(t)^{-1}\ones)$.

\copyableTheorem{BestResponseSteadyState}{ When $G$ is a complete
  graph, best-response dynamics converge to a unique steady-state, for
  all starting estimators $\bfy(0)$ and all choices of parameters
  $\{\tau_i\}$ and $\sigma$.  Moreover, the convergence is fast, in
  the sense that $-\log|\beta(t)-\beta^*| = O(t)$, where $\beta^* =
  \lim_{t \to \infty} \beta(t)$.}

\copyableTheorem{FixedResponseConverges}{ In fixed response
  dynamics, if $A_i > 0$ for all $i \in [n]$ then system converges to
  a steady state $\C = \lim_{t \to \infty}\C(t)$ such that
  \begin{equation}
    \C =  \A^2\T + \sigma^2 \P \ones\ones^\tr \P^\tr + \P \C \P^\tr.
  \label{eq:wait-recurrence}
  \end{equation}
  In particular, $\C$ is independent of the starting estimators $\bfy(0)$.
}

\copyableTheorem{BestResponseNonOptimality}{

  Let $G$ be a graph with $[n]$ vertices. Fix $\sigma$ and
  $\{\tau_i\}_{i \in [n]}$.

  Consider {\em best response} dynamics for $n$ agents on $G$ with
  $\sigma$ and $\{\tau_i\}_{i \in [n]}$. Let $\C^{br}$ denote the
  steady state the system converges to.

  Consider {\em fixed response} dynamics with some $\P$ and $\A$ for
  $n$ agents on $G$ with $\sigma$ and $\{\tau_i\}_{i \in [n]}$. Let
  $\C^{fr}$ denote the steady state the system converges to.

  Then there exists a choice of $n$, $G$, $\sigma$, $\{\tau_i\}$, $\A$
  and $\P$ such that $C_{ii}^{br} > C_{ii}^{fr}$ for all $i \in [n]$.
}

\copyableTheorem{NoAysmptoticLearningInFixedResponse}{ If
  $\sigma,\tau>0$, no fixed response dynamics can achieve socially
  asymptotic learning.  }

\copyableTheorem{PenultimatePerfect}{
   Penultimate prediction on the complete graph achieves perfect learning.
 }

\section{Background results}
\label{'section background results'}
\subsection{Time evolution of the covariance matrix}
We commence by proving a preliminary proposition on the relation
between the coefficients matrices $\P(t)$ and $\A(t)$, and the
covariance matrix $\C(t)$ in the best response and fixed response
models. This result does not depend on how $\P(t)$ and $\A(t)$ are
calculated, and therefore applies to both models.

First, let us calculate the covariance matrix directly. By the
definition of $\C(t)$ and by Eq.~\eqref{eq:y-def-vec}
we have that
\begin{align*}
  \C(t) = \Var{\Y(t)-\ones S(t)} = \Var{\A(t) \M(t) + \P(t) \Y(t-1) -\ones S(t)}.
\end{align*}
Since $S(t) = S(t-1)+X(t-1)$ then we can write
\begin{align*}
  \C(t) = \Var{\A(t) \bigl(\M(t)-\ones S(t)\bigr) + \P(t) \Y(t-1) - (\I-\A(t))\ones\bigl(S(t-1)+X(t-1)\bigr)}.
\end{align*}
Since the estimators $\{Y_i(t)\}$ are unbiased then $\P(t)\ones =
(\I-\A(t))\ones$; see the definitions of the models in
Section~\ref{'section-models'}, and in particular
Eq.~\eqref{eq:convex}. Hence
\begin{align*}
  \C(t) = \A(t) \T\, \A(t)^\tr  + \Var{\P(t)(\Y(t-1) - \ones S(t-1))} + \Var{\P(t)\ones X(t-1)},
\end{align*}
since $\Var{\M(t)-\ones S(t)} = \Var{\D(t)} = \T$. Finally, since
$\Var{\bfy(t-1)} = \C(t-1)$ we can write
\begin{align}
  \C(t) = \A(t)^2 \T + \P(t)\C(t-1)\P(t)^\tr +
  \sigma^2\P(t)\ones\ones^\tr\P(t)^\tr. \label{eq:recurrence}
\end{align}

\begin{proposition}
  Let $\Q(r,t) = \prod_{s=r+1}^t \P(s)$ and $\W(t) = \A(t)^2\T +
  \sigma^2 \P(t) \ones\ones^\tr \P(t)^\tr$ with $\W(0) = \C(0)$.  Then
  for all $t \geq 1$,
  \begin{align}
    \label{eq:c-closed-form}
    \C(t) = \sum_{r=0}^{t} \Q(r,t) \W(r) \Q(r,t)^\tr.
  \end{align}
  \label{prop:c-closed-form}
\end{proposition}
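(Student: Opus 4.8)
The plan is to prove Equation~\eqref{eq:c-closed-form} by induction on $t$, using the one-step recurrence in Eq.~\eqref{eq:recurrence} as the engine. The base case $t=1$ should follow by direct substitution: with the conventions $\Q(t,t) = \I$ (the empty product) and $\W(0) = \C(0)$, the claimed formula reads $\C(1) = \Q(0,1)\W(0)\Q(0,1)^\tr + \Q(1,1)\W(1)\Q(1,1)^\tr = \P(1)\C(0)\P(1)^\tr + \W(1)$, and since $\W(1) = \A(1)^2\T + \sigma^2\P(1)\ones\ones^\tr\P(1)^\tr$ this is exactly Eq.~\eqref{eq:recurrence} at $t=1$.

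For the inductive step, assume Eq.~\eqref{eq:c-closed-form} holds at $t-1$, so that $\C(t-1) = \sum_{r=0}^{t-1} \Q(r,t-1)\W(r)\Q(r,t-1)^\tr$. Substituting this into Eq.~\eqref{eq:recurrence} gives
\begin{align*}
  \C(t) = \P(t)\left(\sum_{r=0}^{t-1} \Q(r,t-1)\W(r)\Q(r,t-1)^\tr\right)\P(t)^\tr + \W(t).
\end{align*}
The key algebraic observation is the composition identity $\P(t)\,\Q(r,t-1) = \Q(r,t)$, which follows immediately from the definition $\Q(r,t) = \prod_{s=r+1}^t \P(s)$ by peeling off the last factor $\P(t)$ (here I need to be careful about the ordering convention in the product, but since the recurrence conjugates on both sides the orientation is forced and consistent). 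Applying this to each summand, $\P(t)\Q(r,t-1)\W(r)\Q(r,t-1)^\tr\P(t)^\tr = \Q(r,t)\W(r)\Q(r,t)^\tr$, so the sum becomes $\sum_{r=0}^{t-1}\Q(r,t)\W(r)\Q(r,t)^\tr$. Finally, the leftover term $\W(t)$ equals $\Q(t,t)\W(t)\Q(t,t)^\tr$ since $\Q(t,t) = \I$, so it is precisely the $r=t$ term, and adding it completes the sum to $\sum_{r=0}^{t}\Q(r,t)\W(r)\Q(r,t)^\tr$, as desired.

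I expect no serious obstacle here; the only point requiring care is fixing the matrix-product ordering convention in $\Q(r,t) = \prod_{s=r+1}^t\P(s)$ so that the telescoping identity $\P(t)\Q(r,t-1) = \Q(r,t)$ holds on the correct side, and making sure the empty-product and $\W(0) = \C(0)$ conventions are stated explicitly so the base case and the $r=t$ term both land correctly. Since the whole argument is just iterating Eq.~\eqref{eq:recurrence}, one could alternatively present it as an explicit unrolling rather than a formal induction, but the induction is cleaner to write.
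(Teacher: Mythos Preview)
Your proof is correct and follows essentially the same approach as the paper: an induction on $t$ driven by the recurrence $\C(t) = \W(t) + \P(t)\C(t-1)\P(t)^\tr$, using the telescoping identity $\P(t)\Q(r,t-1) = \Q(r,t)$ and the empty-product convention $\Q(t,t) = \I$. Your write-up is in fact slightly more careful than the paper's in making the conventions and the composition identity explicit.
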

\begin{proof}
  First note that equation~\eqref{eq:recurrence} becomes simply $\C(t)
  = \W(t) + \P(t) \C(t-1) \P(t)^\tr$.  The base of the induction $t=1$
  is now immediate, since $\W(0)=\C(t-1)$. Now assume that it holds to
  time $t$. Then
  \begin{align*}
    \C(t+1)
    &= \W(t+1) + \P(t+1) \C(t) \P(t+1)^\tr
    \\
    &= \W(t+1) + \P(t+1) \left[\sum_{r=0}^{t} \Q(r,t) \W(r) \Q(r,t)^\tr \right] \P(t+1)^\tr
    \\
    &= \W(t+1) + \sum_{r=0}^{t} \Q(r,t+1) \W(r) \Q(r,t+1)^\tr
    \\
    &= \sum_{r=0}^{t+1} \Q(r,t) \W(r) \Q(r,t)^\tr.
  \end{align*}
\end{proof}

\subsection{Minimum variance unbiased linear estimator}
\label{section:min_var_est}
We show in this subsection how in general a minimum variance unbiased
linear estimator is calculated, given a collection of estimators with
a known covariance matrix.

Let $X$ be a random variable and let $(Z_1, \ldots, Z_n)$ be random
variables such that $\CondE{Z_i}{X}=X$ for all $i \in [n]$. Let
$C_{ij} = \Cov{Z_i-X,Z_j-X}$, with $\C$ being the matrix with entries
$C_{ij}$.

Let $M = \sum_ib_iZ_i$ be the minimum variance unbiased linear
estimator of $X$, i.e., let $(b_1,\ldots,b_n)$ minimize $\Var{M - X}$
under the constraint that $\CondE{M}{X} = X$, which is equivalent to
$\sum_ib_i=1$, since $\CondE{Z_i}{X}=X$ for all $i$.

Denote $\B = (b_1,\ldots,b_n)$.
\begin{proposition}
  \label{prop:minimum-var-est}
  \begin{align}
    \label{eq:minimum-var-est}
    \B = \frac{\ones^\tr \C^{-1}}{\ones^\tr \C^{-1} \ones}.
  \end{align}
\end{proposition}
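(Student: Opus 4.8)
The plan is to solve the constrained optimization directly. We want to minimize $\Var{M-X}$ where $M = \sum_i b_i Z_i$ subject to $\sum_i b_i = 1$. First I would observe that since $\sum_i b_i = 1$ and $\CondE{Z_i}{X} = X$, we can write $M - X = \sum_i b_i(Z_i - X)$, so that $\Var{M-X} = \B^\tr \C \B$ where $\B = (b_1,\ldots,b_n)^\tr$. Thus the problem reduces to minimizing the quadratic form $\B^\tr \C \B$ over the affine hyperplane $\ones^\tr \B = 1$.

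Next I would apply the method of Lagrange multipliers. Form the Lagrangian $L(\B,\lambda) = \B^\tr \C \B - \lambda(\ones^\tr \B - 1)$; setting the gradient with respect to $\B$ to zero gives $2\C\B = \lambda \ones$, hence $\B = \tfrac{\lambda}{2}\C^{-1}\ones$ (here I would note that $\C$, being a covariance matrix, is positive semidefinite, and we may assume it is actually positive definite so that $\C^{-1}$ exists and the quadratic form is strictly convex, guaranteeing the critical point is the unique global minimum). Then I would enforce the constraint $\ones^\tr \B = 1$: substituting gives $\tfrac{\lambda}{2}\ones^\tr \C^{-1}\ones = 1$, so $\tfrac{\lambda}{2} = 1/(\ones^\tr \C^{-1}\ones)$. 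Plugging back in yields
\begin{align*}
  \B = \frac{\C^{-1}\ones}{\ones^\tr \C^{-1}\ones},
\end{align*}
which is exactly Eq.~\eqref{eq:minimum-var-est} written as a column vector (the statement writes $\B$ as a row vector, so one takes the transpose; since $\C$ is symmetric, $\C^{-1}\ones$ transposed is $\ones^\tr \C^{-1}$).

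The bulk of this argument is routine; the only real subtlety is the invertibility of $\C$. A covariance matrix is always positive semidefinite but need not be invertible — if some nontrivial linear combination $\sum_i c_i(Z_i - X)$ has zero variance, then $\C$ is singular and $\C^{-1}$ is undefined. I would handle this either by explicitly assuming $\C$ is nonsingular (which is the generic case and is implicitly needed for the formula to even make sense), or by remarking that in the degenerate case one can restrict attention to a maximal linearly independent subset of the $Z_i - X$ and solve the reduced problem there, obtaining an estimator with zero variance. For the purposes of this paper the nondegenerate assumption suffices, since later applications carry $\C(t)$ through recurrences where positive definiteness is maintained. So the "hard part" is really just a bookkeeping caveat rather than a genuine obstacle.
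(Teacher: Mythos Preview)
Your proposal is correct and follows essentially the same route as the paper: rewrite $\Var{M-X}=\B^\tr\C\B$ using $\sum_i b_i=1$, then solve the constrained quadratic via Lagrange multipliers. The paper leaves the Lagrange step as ``a straightforward calculation'' while you spell it out, and its handling of singular $\C$ (via a pseudo-inverse, stated just after the proof) is a slight variant of your linearly-independent-subset remark.
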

\begin{proof}
  By definition
  \begin{align*}
    \Var{M-X} = \Var{\sum_ib_iZ_i-X} = \Cov{\sum_ib_iZ_i-X, \sum_jb_jZ_j-X}.
  \end{align*}
  Since $\sum_ib_i=1$ then we can write
  \begin{align*}
    \Var{M-X} = \Cov{\sum_ib_i(Z_i-X), \sum_jb_j(Z_j-X)},
  \end{align*}
  and then by the bilinearity of covariance we have that
  \begin{align*}
    \Var{M-X} = \sum_{ij}b_ib_j\Cov{Z_i-X,Z_j-X} = \B^\tr \C \B.
  \end{align*}
  Note that we again used here the fact that $\sum_ib_i=1$.

  To find $\B$ that minimizes this expression under the constraint
  that $\sum_ib_i=1$ we use Lagrange multipliers to minimize
  \begin{align*}
    \B^\tr \C \B + \lambda(\ones^\tr\B-1),
  \end{align*}
  which is a straightforward calculation yielding
  Eq.~\eqref{eq:minimum-var-est}.
\end{proof}
We assumed in this proof that $\C$ is an invertible matrix. When $\C$
is not invertible then it is easy to show that the same statement
holds, with $\C^{-1}$ being a pseudo-inverse of $\C$. While for
different such pseudo-inverses one gets different values of $\B$, the
variance of the different $M$'s is identical.

The following two corollaries follow directly from
Proposition~\ref{prop:minimum-var-est}.
\begin{corollary} \label{'corollary weight is inverse of variance'} If
  $\C$ is a diagonal matrix, then $b_i$, the weight given to each
  variable $Z_i$, in the minimum variance unbiased estimator is
  proportional to $\Var{Z_i - X}^{-1}$ and the variance of the minimal
  variance unbiased estimator is $1/(\sum_i \Var{Z_i - X}^{-1})$.
\end{corollary}

\begin{corollary} \label{'corollary combining two independent values'}
  If $$\C = \left( \begin{array}{cc} \sigma_1^2 & 0 \\ 0 &
      \sigma_2^2 \end{array} \right)$$ then the minimum variance
  unbiased estimator is
  \begin{align*}
   M = \frac{\sigma_2^2}{\sigma_1^2 +
    \sigma_2^2}Z_1 + \frac{\sigma_1^2}{\sigma_1^2 + \sigma_2^2}Z_2,
  \end{align*}
  with
  \begin{align*}
   \Var{M-X} = \frac{\sigma_1^2 \sigma_2^2}{\sigma_1^2 +
    \sigma_2^2}.
  \end{align*}
\end{corollary}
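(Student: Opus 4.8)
The plan is to prove the stronger statement that on the complete graph the penultimate estimators literally coincide with the theoretical optima: for every $t \ge 1$ and every agent $i$, $Y_i(t) = Z_i(t)$ as random variables (hence in particular their variances agree, which is perfect learning). Throughout let $\mathcal{D}_s = \{\bfy(0)\} \cup \{M_j(u) : j \in [n],\, u \le s\}$ denote the pooled data through time $s$, and introduce one auxiliary object: let $G(s)$ be the global MVULE of $S(s)$ from $\mathcal{D}_s$, with error $\epsilon_s = G(s) - S(s)$. By definition $Z_i(t)$ is the MVULE of $S(t)$ from $\mathcal{D}_{t-1} \cup \{M_i(t)\}$, so $G(s)$ is the natural ``summary of the present state'' that the memory $R_i(t)$ is meant to reconstruct.

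The single fact that does all the work is a one-step sufficiency lemma: for every $s \ge 0$ and every subset $\mathcal{A} \subseteq \{M_j(s+1)\}_j$ of the fresh measurements, the MVULE of $S(s+1)$ from $\mathcal{D}_s \cup \mathcal{A}$ equals the MVULE of $S(s+1)$ from $\{G(s)\} \cup \mathcal{A}$; that is, once one knows $G(s)$, the remaining history in $\mathcal{D}_s$ is worthless for predicting $S(s+1)$. I would prove this by the orthogonality principle underlying Proposition~\ref{prop:minimum-var-est}. Any unbiased (weight-summing-to-one) linear estimator built from $\mathcal{D}_s \cup \mathcal{A}$ splits as $\gamma G(s) + h' + V$, where $V$ is the part supported on $\mathcal{A}$, $\gamma$ is the total weight placed on $\mathcal{D}_s$, and $h'$ is a weight-zero contrast in the span of $\mathcal{D}_s$. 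Substituting $S(s+1) = S(s) + X(s)$ and $M_j(s+1) = S(s+1) + D_j(s+1)$, the error collapses to $\gamma \epsilon_s - \gamma X(s) + \sum_a d_a D_{j_a}(s+1) + h'$. Since $X(s)$ and the time-$(s{+}1)$ noises are independent of everything dated $\le s$, and since the optimality of $G(s)$ forces $\Cov{\epsilon_s, h'} = 0$ for every such contrast $h'$, the variance splits as $\gamma^2 \Var{\epsilon_s} + \Var{h'} + \gamma^2 \sigma^2 + \sum_a d_a^2 \tau_{j_a}^2$; the only term carrying $h'$ is $\Var{h'} \ge 0$, so any optimum takes $h' = 0$, i.e.\ uses $\mathcal{D}_s$ only through $G(s)$. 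This lemma is the main obstacle; everything after it is bookkeeping.

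With the lemma in hand I would run an induction on $t$ establishing simultaneously that (P1) $R_i(t) = G(t-1)$ for all $i$, and (P2) $Y_i(t) = Z_i(t)$, equal to the MVULE of $S(t)$ from $\{G(t-1), M_i(t)\}$. For the base case $t=1$ the placeholder $R_i(0)=0$ carries no information, so $R_i(1)$ is the MVULE of $S(0)$ from $\bfy(0) = \mathcal{D}_0$, namely $G(0)$, and then $Y_i(1)$ equals $Z_i(1)$ by the lemma with $s=0$. For the inductive step the completeness of the graph is exactly what lets each agent see all of $\bfy(t)$: by (P1)--(P2) the inputs $\{R_i(t), \bfy(t)\}$ used to form $R_i(t+1)$ equal $\{G(t-1), Z_1(t), \dots, Z_n(t)\}$. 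Each $Z_j(t)$ is, by (P2) and Corollary~\ref{'corollary combining two independent values'}, the convex combination $\alpha_j G(t-1) + (1-\alpha_j) M_j(t)$ of two uncorrelated estimators of $S(t)$, with weight $1-\alpha_j = v/(v+\tau_j^2) > 0$ on the fresh measurement, where $v = \Var{\epsilon_{t-1}} + \sigma^2 > 0$ is the error variance of $G(t-1)$ as an estimator of $S(t)$. Hence each $M_j(t)$ is recoverable from $Z_j(t)$ and $G(t-1)$, so $\{R_i(t), \bfy(t)\}$ spans the same linear space as $\{G(t-1)\} \cup \{M_j(t)\}_j$; consequently $R_i(t+1)$, the MVULE of $S(t)$ from these, equals the MVULE from $\mathcal{D}_{t-1} \cup \{M_j(t)\}_j = \mathcal{D}_t$ by the lemma, which is precisely $G(t)$, giving (P1) at $t+1$ (and showing it is independent of $i$). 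Finally $Y_i(t+1)$, the MVULE of $S(t+1)$ from $\{G(t), M_i(t+1)\}$, equals the MVULE from $\mathcal{D}_t \cup \{M_i(t+1)\}$ by the lemma with $s=t$, namely $Z_i(t+1)$, giving (P2). Because an MVULE depends only on the linear span of its inputs, all these identities hold as equalities of random variables, so the induction is self-sustaining.

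Concluding, (P2) yields $\Var{Y_i(t)-S(t)} = \Var{Z_i(t)-S(t)}$ for every $t \ge 1$ and every $i$, which is exactly perfect learning on the complete graph. Beyond the sufficiency lemma I expect only two delicate points: the precise treatment of the degenerate initial memory $R_i(0)=0$ (which should be read as an infinite-variance, zero-weight input), and verifying that the weight $1-\alpha_j$ on the fresh measurement is nonzero so the span identity holds. Both are immediate once $\sigma > 0$ and the $\tau_j$ are finite and positive.
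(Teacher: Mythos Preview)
Your proposal does not prove the stated corollary at all. The statement you were asked to prove is the two-point special case of Proposition~\ref{prop:minimum-var-est}: when $\C=\mathrm{diag}(\sigma_1^2,\sigma_2^2)$, compute $\B=\ones^\tr\C^{-1}/(\ones^\tr\C^{-1}\ones)$ and $\Var{M-X}=1/(\ones^\tr\C^{-1}\ones)$. The paper's ``proof'' is simply the one-line remark that this follows directly from the proposition; the honest verification is just plugging in $\C^{-1}=\mathrm{diag}(\sigma_1^{-2},\sigma_2^{-2})$ and simplifying.

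What you wrote instead is a proof sketch for Theorem~\ref{thmPenultimatePerfect} (perfect learning of penultimate prediction on the complete graph). That argument may well be sound on its own terms --- it is in fact close in spirit to the paper's proof of that theorem, with your sufficiency lemma playing the role of the Kalman-filter/Markov reduction the paper invokes --- but it has nothing to do with the corollary about combining two independent estimators. You even \emph{cite} Corollary~\ref{'corollary combining two independent values'} inside your inductive step, so your argument presupposes the very statement you were meant to establish. Replace the proposal with the direct substitution into Eq.~\eqref{eq:minimum-var-est}.
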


Note that in the best response model $Y_i(t)$ is the minimum variance
unbiased linear estimator of $S(t)$ given $M_i(t)$ and
$\{Y_j(t-1)\}$. Hence to calculate it is suffices to know the
covariances of these estimators.  It follows from the definitions that
\begin{align*}
  \Cov{Y_j(t-1)-S(t),Y_k(t-1)-S(t)} = \C_{jk}(t-1)+\sigma^2,
\end{align*}
\begin{align*}
  \Cov{Y_j(t-1)-S(t),M_i(t)-S(t)} = \tau_j^2+\sigma^2+\tau_i^2,
\end{align*}
and
\begin{align*}
  \Cov{M_i(t)-S(t),M_i(t)-S(t)} = \tau_i^2.
\end{align*}
Thus knowing $\C(t-1)$, $\sigma$ and $\{\tau_j\}$ is sufficient to
calculate the coefficients $A_i(t)$ and $\{P_{ij}(t)\}$ in the
best response model.

\subsection{Best response with a single agent}
\label{sec:kalman}
We provide the following proposition without proof. It is a
consequence of basic Kalman filter theory~\cite{kalman1960new}; it is
shown there that, for $n=1$, the MVULE of $S(t)$ given all the
measurements up to time $t$ is identical to the MVULE of $S(t)$ given
the new measurement at time $t$ and the previous estimator. Formally:
\begin{proposition}
  \label{prop:kalman-1}
  Best response achieves perfect learning when $n=1$.
\end{proposition}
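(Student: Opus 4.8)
The plan is to prove the statement by induction on $t$, establishing the stronger inductive claim that the best-response estimator $Y_1(t)$ is \emph{itself} the MVULE $Z_1(t)$ of $S(t)$ from the complete data set $\mathcal{F}_t := \{Y_1(0), M_1(1), \ldots, M_1(t)\}$ (not merely that the two have equal variance). The base case $t=0$ is immediate, since $Y_1(0)$ is the only datum. For the inductive step I would assume $Y_1(t-1) = Z_1(t-1)$, i.e. that $Y_1(t-1)$ is the MVULE of $S(t-1)$ from $\mathcal{F}_{t-1}$, and deduce the analogous statement at time $t$.

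The heart of the argument is a sufficiency (Markov) lemma: once the fresh measurement $M_1(t)$ is in hand, the single summary $Y_1(t-1)$ already carries all of the linearly usable information that $\mathcal{F}_{t-1}$ holds about $S(t)$; formally, the MVULE of $S(t)$ from $\mathcal{F}_t$ lies in $\mathrm{span}(Y_1(t-1), M_1(t))$. To prove this I would work in the Hilbert space of finite-variance random variables with the covariance inner product, where an MVULE is characterized (this is exactly the Lagrange condition behind Proposition~\ref{prop:minimum-var-est}) by the orthogonality of its error to the admissible directions, namely the contrasts $\sum_i c_i Z_i$ with $\sum_i c_i = 0$ that preserve unbiasedness. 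The inductive hypothesis then states that $S(t-1) - Y_1(t-1)$ is uncorrelated with every contrast among $\mathcal{F}_{t-1}$. Writing $S(t) = S(t-1) + X(t-1)$ and $M_1(t) = S(t-1) + X(t-1) + D_1(t)$, and using that the new increment $X(t-1)$ and the new noise $D_1(t)$ are independent of, hence orthogonal to, everything in $\mathcal{F}_{t-1}$, I would show that the projection of $M_1(t)$ onto $\mathcal{F}_{t-1}$ coincides with that of $S(t-1)$, so its residual equals $M_1(t) - Y_1(t-1)$; the incremental correction to the estimate therefore lies in $\mathrm{span}(Y_1(t-1), M_1(t))$, giving the lemma.

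Granting the lemma, computing $Y_1(t)$ reduces to optimally combining two estimators of $S(t)$: the predicted value $Y_1(t-1)$, whose error $Y_1(t-1) - S(t)$ has variance $C_{11}(t-1) + \sigma^2$ (the extra $\sigma^2$ coming from the independent increment $X(t-1)$), and the new measurement $M_1(t)$, whose error has variance $\tau_1^2$. Since $D_1(t)$ is independent of $Y_1(t-1) - S(t)$, these two errors are uncorrelated, so Corollary~\ref{'corollary combining two independent values'} gives the optimal combination explicitly and identifies it with the best-response rule of Section~\ref{subsec:best-response}. Together with the lemma this shows $Y_1(t) = Z_1(t)$, closing the induction and establishing $\Var{Y_1(t) - S(t)} = \Var{Z_1(t) - S(t)}$, i.e. perfect learning.

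I expect the sufficiency lemma to be the main obstacle. The affine unbiasedness constraint $\sum_i b_i = 1$ blocks a naive orthogonal-projection argument, so the care lies in choosing the correct inner-product space, phrasing ``information'' through the contrast subspace, and verifying that the fresh increment and fresh noise are genuinely orthogonal to the \emph{entire} past---this last point being precisely where the model's independence and i.i.d.\ assumptions are indispensable. The remaining steps are routine variance bookkeeping already packaged in Corollaries~\ref{'corollary weight is inverse of variance'} and~\ref{'corollary combining two independent values'}.
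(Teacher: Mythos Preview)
Your proposal is correct, and in fact goes well beyond what the paper does: the paper explicitly states this proposition \emph{without proof}, deferring to basic Kalman filter theory~\cite{kalman1960new} for the fact that, when $n=1$, the MVULE of $S(t)$ given all past data equals the MVULE given only $M_1(t)$ and the previous estimator $Y_1(t-1)$.

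What you have outlined is exactly the standard innovations/orthogonality derivation of the scalar Kalman filter, recast in the paper's MVULE language. The inductive hypothesis gives that $S(t-1)-Y_1(t-1)$ is orthogonal to every contrast in $\mathcal{F}_{t-1}$; adding the independent increment $X(t-1)$ shows $Y_1(t-1)$ is also the MVULE of $S(t)$ from $\mathcal{F}_{t-1}$; and then one orthogonality check against the innovation $M_1(t)-Y_1(t-1)$ (using that $D_1(t)$ is independent of the past and of $X(t-1)$) pins down the unique optimal affine combination of $Y_1(t-1)$ and $M_1(t)$, which is precisely the best-response rule of Section~\ref{section:understanding-best-response}. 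Your concern about the affine constraint is well placed but not an obstacle: since every datum $Z_s\in\mathcal{F}_t$ satisfies $\E{Z_s}=\E{S(t)}$, the MVULE over affine combinations coincides with the $L^2$ projection onto the affine span, and the contrast-orthogonality characterization you invoke is exactly the first-order condition of Proposition~\ref{prop:minimum-var-est}. So the ``sufficiency lemma'' is not a genuine obstacle here; it falls out in two lines once the independence assumptions are invoked.
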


\section{Best response dynamics}

\newcommand{\ba}{\boldsymbol{a}}
\renewcommand{\p}{{\bf p}}

Recall that in best response dynamics at time $t>0$ agent $i$ chooses
$A_i(t)$ and $\{P_{ij}(t)\}_j$ that minimize $C_{ii}(t)$, under the
constraints that $P_{ij}(t)=0$ if $j \not \in \partial i$, and
$A_i(t)+\sum_jP_{ij}(t)=1$. Thus $Y_i(t)$ is in fact the MVULE (see
definition in Section~\ref{subsec:best-response}) of $S(t)$, given
$M_i(t)$ and $\{Y_j(t-1)\}_{j \in \partial i}$.

Note that to calculate $A_i(t)$ and $\{P_{ij}(t)\}_j$ it is
necessary (and, in fact, sufficient, as we note in
Section~\ref{section:min_var_est}) to know $\C(t-1)$, $\sigma$
and $\{\tau_j\}$, and so this model indeed assumes that the agents know
the covariance matrix of their neighbors' estimators.

\subsection{Understanding best-response dynamics}
\label{section:understanding-best-response}

The condition that estimators are unbiased, or
$A_i(t)+\sum_jP_{ij}(t)=1$, means that given $\{P_{ij}(t)\}_j$ one can
calculate $A_i(t)$, or alternatively given $\P$ one can calculate
$\A$. Hence, fixing $\sigma$ and $\{\tau_j\}$, $\P(t)$ is a
deterministic function of $\C(t-1)$. Since by
Eq.~\eqref{eq:recurrence} $\C(t)$ is a function of $\A(t)$, $\P(t)$ and
$\C(t-1)$, then under best response dynamics, $\C(t)$ is in fact a
function of $\C(t-1)$.  We will denote this function by $F$, so that
$\C(t) = F(\C(t-1))$.  Our goal is to understand this map $F$, and in
particular to determine its limiting behavior.

We next analyze in more detail the best response calculation for agent
$i$.  This can conceptually be divided into two stages: calculating a
best estimator for $S(t)$ from $\Y(t-1)$, and then combining that with
$M_i(t)$ for a new estimator of $S(t)$.

Let the vector $\Y_{\partial i}(t-1) = \{Y_j(t-1)|j\in \partial i\}$
and let $\C_i(t-1) = C_{\partial i,\partial i}(t-1)$ be the covariance
matrix of the estimators of the neighbors of agent $i$.

Denote by $\q_{i}(t)$ the vector of coefficients for $\Y_{\partial
  i}(t-1)$ that make $Z_i = \q_i(t)^\tr \Y_{\partial i}(t-1)$ a
minimum variance unbiased linear estimator for $S(t)$; note that this
is also the estimator for $S(t-1)$. Then by
Proposition~\ref{prop:minimum-var-est} we have that
\begin{align*}
  \q_{i}(t) = \beta_i(t-1) \ones^\tr \C_i(t-1)^{-1},
\end{align*}
where $\beta_i(t-1) = 1/\ones^\tr \C_i(t-1)^{-1}\ones$.
It is easy to see that $\Var{Z_i-S(t-1)} = \beta_i(t-1)$ and thus
$\Var{Z_i-S(t)} = \beta_i(t-1) + \sigma^2$.

$M_i(t)$ is an independent estimator of $S(t)$ with variance
$\tau_i^2$. To combine it optimally with $Z_i$ we set
\begin{align}
  A_i(t) = \frac{\beta_i(t-1) +
    \sigma^2}{\tau_i^2+\beta_i(t-1)+\sigma^2} \geq
  \frac{\sigma^2}{\tau_i^2+\sigma^2},
  \label{eq:alpha}
\end{align}
by Corollary~\ref{'corollary combining two independent values'}.  The
optimal weight vector $\bp_i(t)$ for agent $i$ (i.e., $\{P_{ij}\}_{j
  \in \partial i}$) is therefore $\bp_i(t) = (1-A_i(t))\q_i(t)$.

\subsection{Complete graph case}

When $G$ is the complete graph, the agents best-respond similarly,
since they all observe the same set of estimators from the previous
iteration. We now have $\C_i(t-1) = \C(t-1)$, $\q_i(t) = \q(t)$, and
$\beta_i(t-1) = \beta(t-1)$, for all $i$.  For the moment, we will
suppress the $t$.  Letting $\ba$ be the vector with coefficients $A_i$,
we then have $\P = (\ones-\ba)\q^\tr = \beta(\ones-\ba)\ones^\tr
\C^{-1}$.  Using this form for $\P$, we can now see that $\P\C\P^\tr =
\beta(\ones-\ba)(\ones-\ba)^\tr$.  Putting this all together, and
adding back the $t$, we have by Eq.~\eqref{eq:recurrence} that
\begin{align}
  \C(t) = \A(t)^2\,\T + (\beta(t-1)+\sigma^2)(\ones-\ba(t))(\ones-\ba(t))^\tr.
  \label{eq:clique-c}
\end{align}

Since by equation~\eqref{eq:alpha}, $A_i(t)$ depends only on
$\beta(t-1)$, $\tau_i$, and $\sigma$, we see that $\C(t) = F(\C(t-1))$
depends on $\C(t-1)$ only through $\beta(t-1) = 1/\ones^\tr
\C(t-1)^{-1}\ones$.  Hence we can write $\C(t) = \C(\beta(t-1))$.  We
now see that we can completely describe the state of the system by a
single parameter $\beta$, and our map $F$ reduces to the map
$f:\beta\mapsto1/\ones^\tr \C(\beta)^{-1}\ones$.  We wish to analyze
this function $f$ as a single-parameter discrete dynamical system.


To simplify our formula for $f$, we will make use of a matrix identity
attributed to Woodbury and others (cf.~\cite{higham2002accuracy}):
\begin{theorem}[Sherman-Morrison-Woodbury formula]
  For any $U \in \R^{n\times k}$, $V \in \R^{k\times n}$, and
  nonsingular $X \in \R^{n\times n}$, $Y \in \R^{k\times k}$ such that
  $Y^{-1}+VX^{-1}U$ is nonsingular,
  \begin{equation}
    \label{eq:woodbury}
    \left(X+UYV \right)^{-1} = X^{-1} - X^{-1}U \left(Y^{-1}+VX^{-1}U \right)^{-1} VX^{-1}.
  \end{equation}
\end{theorem}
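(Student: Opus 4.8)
The plan is to prove the identity by \emph{direct verification}: write down the candidate matrix on the right-hand side and check that its product with $X+UYV$ is the identity. Set $W = Y^{-1}+VX^{-1}U$, which is nonsingular by hypothesis, and let $N = X^{-1} - X^{-1}U W^{-1} V X^{-1}$ be the claimed inverse. Since $X+UYV$ and $N$ are both $n\times n$, it suffices to show $(X+UYV)N = \ones$ (the $n\times n$ identity matrix); invertibility of $X+UYV$ is then automatic and $N$ is forced to be its unique inverse, so no separate existence argument is needed.

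The key steps, in order: first, expand $(X+UYV)N$ into four terms using $XX^{-1}=\ones$, namely $\ones + UYVX^{-1} - UW^{-1}VX^{-1} - UYVX^{-1}UW^{-1}VX^{-1}$. Second, collect the two terms carrying a factor $W^{-1}$ and pull $U(\cdot)W^{-1}VX^{-1}$ out on the left and right, so that the bracketed middle factor is $\ones + YVX^{-1}U$. Third — the one observation that makes everything collapse — rewrite $\ones + YVX^{-1}U = Y(Y^{-1}+VX^{-1}U) = YW$; then $U(YW)W^{-1}VX^{-1} = UYVX^{-1}$, which exactly cancels the earlier $+UYVX^{-1}$, leaving $\ones$. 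If symmetry is wanted one can also run the mirror computation to check $N(X+UYV)=\ones$, but this is logically unnecessary once one direction is established.

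The main obstacle is essentially nil: the whole argument is a two-line non-commutative manipulation once the regrouping $\ones + YVX^{-1}U = YW$ is spotted. The only points needing a little care are (i) noting that all the inverses in sight ($X$, $Y$, and $W$) exist precisely under the stated hypotheses, and (ii) keeping the factor order straight, since none of the products commute. An alternative, more conceptual derivation would invert the block matrix $\begin{pmatrix} X & -U \\ V & Y^{-1}\end{pmatrix}$ in two ways — eliminating the $(1,1)$ block via its Schur complement $Y^{-1}+VX^{-1}U$, and eliminating the $(2,2)$ block via $X+UYV$ — and then equate the two resulting expressions for the $(1,1)$ block of the inverse; that route also yields the companion identities for the off-diagonal blocks, but for the present purpose the direct verification is shortest and self-contained, so that is the one I would write up.
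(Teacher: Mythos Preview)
Your proof is correct: the direct verification that $(X+UYV)N=\I$ via the regrouping $\I_k + YVX^{-1}U = Y(Y^{-1}+VX^{-1}U)=YW$ is the standard and cleanest argument, and every step goes through as you describe. One cosmetic point: in this paper $\ones$ denotes the all-ones column vector, not the identity matrix (which is $\I$), so you should swap notation when writing this up.

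As for comparison with the paper: there is nothing to compare. The paper does not prove the Sherman--Morrison--Woodbury formula at all; it simply quotes the identity as a known result with a citation to Higham's book, and then applies it in the proof of Lemma~\ref{lem:clique-f}. So your proposal supplies a proof where the paper has none.
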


Using the formula in Eq.~\eqref{eq:woodbury}, we can expresses $f$ in terms
of $\beta$:
\begin{lemma}
  Let $y = \sum_i \tau_i^2$ and $z = (\sum_i \tau_i^2)(\sum_i
  \tau_i^{-2})$.  Then $f(\beta)$ has the following form:
  \begin{align}
    f(\beta) &=
    \frac
    {y \left(\beta +\sigma ^2\right) \left(y+\beta +\sigma ^2\right)}
    {y \left(y-(n-2) n \left(\beta +\sigma ^2\right)\right)+z
      \left(\beta +\sigma ^2\right) \left(y+\beta +\sigma ^2\right)}
  \end{align}
  \label{lem:clique-f}
\end{lemma}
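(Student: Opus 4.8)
The plan is to express $\C(\beta)$ as a diagonal matrix plus a rank-one term, invert it with the Sherman--Morrison--Woodbury formula, and then read off $\ones^\tr\C(\beta)^{-1}\ones$ by evaluating three explicit scalar sums.

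First I would substitute $A_i = A_i(t)$ from Eq.~\eqref{eq:alpha} into Eq.~\eqref{eq:clique-c}. Writing $b := \beta + \sigma^2$, this gives $A_i = b/(\tau_i^2+b)$, hence $1-A_i = \tau_i^2/(\tau_i^2+b)$ and $A_i^2\tau_i^2 = b^2\tau_i^2/(\tau_i^2+b)^2$, so
\begin{align*}
  \C(\beta) = D + b\,vv^\tr, \qquad D := \diag\bigl(A_i^2\tau_i^2\bigr), \quad v := \ones - \ba,
\end{align*}
where $D$ is invertible since every $\tau_i>0$ and $b>0$. Applying Eq.~\eqref{eq:woodbury} with $X = D$, $U = v$, $Y = b$, $V = v^\tr$ --- legitimate because the scalar $b^{-1}+v^\tr D^{-1}v$ is strictly positive --- yields
\begin{align*}
  \C(\beta)^{-1} = D^{-1} - \frac{b\,D^{-1}vv^\tr D^{-1}}{1 + b\,v^\tr D^{-1}v},
\end{align*}
and hence
\begin{align*}
  \ones^\tr\C(\beta)^{-1}\ones = \ones^\tr D^{-1}\ones - \frac{b\,(\ones^\tr D^{-1}v)^2}{1 + b\,v^\tr D^{-1}v}.
\end{align*}

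Next I would evaluate the scalars. Since $D^{-1}_{ii} = (\tau_i^2+b)^2/(b^2\tau_i^2)$, short computations give $v^\tr D^{-1}v = b^{-2}\sum_i\tau_i^2 = y/b^2$, $\ones^\tr D^{-1}v = b^{-2}\sum_i(\tau_i^2+b) = (y+nb)/b^2$, and, expanding $(\tau_i^2+b)^2/\tau_i^2 = \tau_i^2 + 2b + b^2/\tau_i^2$ and using $\sum_i\tau_i^{-2} = z/y$, also $\ones^\tr D^{-1}\ones = b^{-2}\bigl(y + 2nb + b^2z/y\bigr)$. Substituting these, noting $1 + b\,v^\tr D^{-1}v = (b+y)/b$, and combining everything over the common denominator $b\,y\,(b+y)$ --- the cubic-in-$b$ numerator collapses once the $y^3$ term and two cross terms cancel --- produces
\begin{align*}
  \ones^\tr\C(\beta)^{-1}\ones = \frac{y\bigl(y - n(n-2)b\bigr) + z\,b\,(y+b)}{b\,y\,(b+y)}.
\end{align*}
Taking reciprocals and re-substituting $b = \beta+\sigma^2$ gives exactly the claimed formula for $f(\beta)$.

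The only delicate point is the bookkeeping in that final simplification, so I would double-check it by expanding $y^2(b+y) + 2nby(b+y) + zb^2(b+y) - y(y+nb)^2$ term by term and confirming it equals $b\bigl[y(y-n(n-2)b) + zb(y+b)\bigr]$. The remaining ingredients --- the rank-one structure of $\C(\beta)$, the applicability of the Woodbury identity, and the evaluation of the three sums --- are routine.
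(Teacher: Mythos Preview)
Your proof is correct and follows essentially the same route as the paper: both recognize $\C(\beta)$ as a diagonal matrix $\A^2\T$ plus the rank-one term $(\beta+\sigma^2)(\ones-\ba)(\ones-\ba)^\tr$, apply the Sherman--Morrison--Woodbury identity, and reduce $\ones^\tr\C(\beta)^{-1}\ones$ to the same three scalar sums before simplifying. Your bookkeeping is in fact cleaner than the paper's (whose intermediate identities appear to contain typos), and your explicit verification of the final cancellation is a nice touch.
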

\begin{proof}
  Let $x=(\beta+\sigma^2)$ for brevity.  We will compute $\ones^\tr
  \C(\beta)^{-1}\ones$ by applying the matrix identity
  \eqref{eq:woodbury} to equation~\eqref{eq:clique-c}, with $k=1$, $X
  = \A^2\,\T$, $Y = I_{1\times 1}$, and
  $U^\tr=V=\sqrt{x}(\ones-\ba)$.  This gives us:
  \begin{align}
    \ones^\tr C(\beta)^{-1} \ones
    &= \ones^\tr \left(\A^{-2}\,\T^{-1} - x \frac{\A^{-2}\,\T^{-1}(\ones-\ba)(\ones-\ba)^\tr \A^{-2}\,\T^{-1}}
      {1 + x(\ones-\ba)^\tr \A^{-2}\,\T^{-1} (\ones-\ba)}\right) \ones
    \nonumber
    \\
    &= \ones^\tr \A^{-2}\,\T^{-1} \ones - x \frac{\left(\ones^\tr \A^{-2}\,\T^{-1}(\ones-\ba)\right)^2}
    {1 + x(\ones-\ba)^\tr \A^{-2}\,\T^{-1} (\ones-\ba)} \ones.
    \label{eq:1cinv1}
  \end{align}
  We have the identities
  \begin{align}
    & \ones^\tr \A^{-2}\,\T^{-1} (\ones-\ba) \;=\; nx + y x^2,
    \nonumber
    \\
    & \ones^\tr \A^{-2}\,\T^{-1} \ones \;=\; y x^2 + 2 n x + z/y,
    \nonumber
    \\
    & (\ones-\ba)^\tr \A^{-2}\,\T^{-1} (\ones -\ba) \;=\; y x^2,
  \end{align}
  from the expression~\eqref{eq:alpha} for $\ba$.  These identities
  allow us to simplify equation~\eqref{eq:1cinv1}:
  \begin{align*}
    \ones^\tr C(\beta)^{-1} \ones &= y x^2 + 2 n x +
    \frac z y - \frac {(n + y x)^2} { x^{-1} +
      y }.
  \end{align*}
  Finally, setting $f(\beta) = 1/\ones^\tr C(\beta)^{-1} \ones$ and
  simplifying gives us the result.
\end{proof}

We are now ready to prove the main theorem of this section.
\thmBestResponseSteadyState
\begin{proof}
  We will make use of the Banach fixed point
  theorem~\cite{banach1922operations} which states that if there
  exists some $k<1$ such that $|f'(\beta)|<k$ for all $\beta$, then
  there is a unique fixed point $\beta^*$ of $f$, and iterates of $f$
  satisfy $|f^t(\beta)-\beta^*| < \frac{k^t}{1-k}|\beta-f(\beta)|$ for
  all starting points $\beta$.  Thus, given this theorem, we need only
  show $|f'(\beta)|<k$ for some $k < 1$.

  First note that the $n=1$ case reduces to a Kalman filter, which we
  review briefly in Section~\ref{sec:kalman}.

  We will find it convenient to think of horizontal shift $g(x) =
  f(x-\sigma^2)$, where $x=\beta+\sigma^2$, but allow any $x>0$.  That
  is, $g$ can be thought of as taking the variance $x$ of the estimate
  of the process \emph{this} round using only estimates from last
  round.  We first compute $g$ and its first two derivatives; letting
  $D = y (y-(n-2) n x)+x z (x+y)$, we have
  \begin{align}
    g(x) & = x y (x+y) / D \label{eq:gx}\\
    g'(x) & = y^2 (y-(n-2) x) (n x+y) / D^2 \label{eq:dgx}\\
    g''(x) & = 2 y^2 \left((n-2) n x^3 z+y^3 \left((n-1)^2-z\right)-3 x^2 y z-3 x y^2 z\right) / D^3 \label{eq:ddgx}
  \end{align}

  Before proving bounds on $f'$, we prove a few useful observations:
  \begin{enumerate}
  \item[A.] $z\geq n^2$.  This follows from the Cauchy-Schwarz inequality,
    since $\sum_i \tau_i\tau_i^{-1} = n$.
  \item[B.] $D>0$.  Expanding $D$, we have three strictly positive
    terms $2 n x y + y^2 + x^2 z$ plus $x y z - n^2 x y$, which is
    non-negative by observation A.
  \item[C.] $g'(x) < 0 \iff y < (n-2) x$.  Since $D>0$ by observation B, this follows from~\eqref{eq:dgx}.
  \end{enumerate}

  \textbf{Upper bound:} We will show that there exists some $k_U$ such
  that $f'(\beta)<k_U<1$ for all $\beta>0$ by showing $g'(x)<k_U$ for
  all $x>\sigma^2$.  First, note that $g'(0)=1$. Next, by observation
  C and some simple algebra, one can show that $g''(x)<0$ as long as
  $g'(x)>0$; that is, $g'$ is strictly decreasing while positive,
  until $x = (n-2)/y$.  Hence, letting $k_U := f'(0)$, we have $k_U =
  g'(\sigma^2) < g'(0) = 1$.  Finally, we have $f'(\beta) < f'(0) =
  k_U$ for all $\beta>0$, so $k_U$ is our upper bound.

  \textbf{Lower bound:} To show the lower bound, we minimize $g'$ with
  respect to $x$ as well as all the parameters.  Let us start with
  $z$.  By observation C, we know that the minimum value of $g'(x)$
  must occur when $x > (n-2)/y$, the region where $g'(x)<0$ (note that
  if $n=2$ we get a trivial lower bound of 0, so henceforth we will
  assume $n>2$).  In this region, it is clear from observations A and
  B that the minimum of $g'$ with respect to $z$ occurs when $z=n^2$.
  Substituting and simplifying, we have
  \begin{equation*}
    g'(x) \geq \frac{y^2 (y-(n-2) x)}{(n x+y)^3} := h(n,x,y).
  \end{equation*}
  We next minimize over $x$: solving $\frac{d}{dx}h(n,x,y)=0$ for $x$
  yields $x = y \frac{2n-1}{n(n-2)}$, and one can check that indeed
  $\frac{d^2}{dx^2}h(n,x,y)>0$ for this $x$.  Substituting again, we
  are now left with
  \begin{equation*}
    g'(x) \geq -\frac{(n-2)^3}{27 n (n-1)^2},
  \end{equation*}
  which is now only a function of $n$.  One can now easily see that
  $g' > -\frac{1}{27}$ for all $n,x,y,z$.

  We have therefore shown that $|f'(\beta)| < k := \max(f'(0), 1/27) <
  1$ for all $\beta$ and all parameter values.
\end{proof}

As a concluding comment we analyze the steady-state $\beta^*$.  From
the form of $f$, one can show that $\beta^*$ satisfies:
\begin{align}
  0 &= y \sigma ^2 \left(y+\sigma ^2\right)
  -\sigma ^2 \left(y (-(n-2) n+z-2)+z \sigma ^2\right) \beta
   \nonumber\\&\quad
  -\left((n-1)^2 y-z \left(y+2 \sigma ^2\right)\right) \beta^2
  -z\beta^3
  \label{eq:steady-state-cubic}
\end{align}

As a corollary of Theorem~\ref{thmBestResponseSteadyState}, this cubic
polynomial has a unique positive root.
\label{'section best-response'}
\section{Fixed response dynamics}
\label{'section fixed-response'}
Recall that in fixed response dynamics each agent $i$, at each round
$t$, has access to its neighbors' estimators from the previous round,
$\{Y_j(t-1)| j \in \partial i\}$, as well as its current measurement
$M_i(t)$. The new estimates are $\bfy(t) = \A \cdot \M(t)+\P \cdot
\bfy(t-1)$, i.e., fixed convex linear combinations of these values.

We first show the system converges to a steady state: as $t$ tends to
infinity the covariance matrix $\C(t) = \Var{\Y(t)-\ones S(t)}$ tends
to some matrix $\C$.

Note that as in Section~\ref{'section best-response'} above, a result
of the convexity condition is that the choice of $\P$ uniquely
determines $\A$.

\subsection{Convergence of fixed response dynamics}
To prove our theorem we shall need the following lemma, as well as an
easy corollary of Proposition~\ref{prop:c-closed-form}.

\begin{lemma}
  Let $P_i \in \R^{n\times n}_+$ satisfy
  $\|P_i\|_\infty \leq \gamma$ for all $1\leq i\leq t$.  Then for any
  $Q \in \R^{n\times n}_+$
  \begin{align*}
    \left\|\left(\prod_{i=1}^t P_i\right) Q \left(\prod_{i=t}^1 P_i^\tr\right)\right\|_\infty \leq n\gamma^{2t}\|Q\|_\infty
  \end{align*}
  \label{lem:matrix-prod}
\end{lemma}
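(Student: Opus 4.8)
The plan is to read $\|\cdot\|_\infty$ as the matrix norm induced by the vector $\infty$-norm, i.e. $\|A\|_\infty = \max_j \sum_k |A_{jk}|$, which for a nonnegative matrix is simply the largest row sum. I will use two standard facts: this norm is submultiplicative, $\|AB\|_\infty \le \|A\|_\infty \|B\|_\infty$, and the transpose satisfies $\|A^\tr\|_\infty = \|A\|_1 = \max_k \sum_j |A_{jk}|$, the largest column sum.

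First I would notice that the expression telescopes. Writing $P = \prod_{i=1}^t P_i = P_1 P_2 \cdots P_t$, the right-hand factor $\prod_{i=t}^1 P_i^\tr = P_t^\tr P_{t-1}^\tr \cdots P_1^\tr$ is exactly $P^\tr$, so the quantity to be bounded is just $\|P Q P^\tr\|_\infty$. Then, by submultiplicativity and the hypothesis $\|P_i\|_\infty \le \gamma$, we get $\|P\|_\infty \le \gamma^t$. For the transpose factor I would use the crude bound relating the two norms for a nonnegative matrix: every column sum of $P$ is at most the sum of all entries of $P$, which is at most $n$ times the largest row sum; hence $\|P^\tr\|_\infty = \|P\|_1 \le n\|P\|_\infty \le n\gamma^t$. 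Combining the three estimates, $\|P Q P^\tr\|_\infty \le \|P\|_\infty \, \|Q\|_\infty \, \|P^\tr\|_\infty \le \gamma^t \cdot \|Q\|_\infty \cdot n\gamma^t = n\gamma^{2t}\|Q\|_\infty$, which is the claim.

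There is no substantive obstacle here; the only points needing care are (i) tracking the order of the products and the transpose so that the expression genuinely collapses to $PQP^\tr$, and (ii) the inequality $\|P\|_1 \le n\|P\|_\infty$, which uses nonnegativity of the entries. As an alternative to invoking the $1$-norm one could expand $(PQP^\tr)_{j\ell} = \sum_{k,m} P_{jk} Q_{km} P_{\ell m}$ and bound the row sums of $PQP^\tr$ directly, but the norm-inequality route is cleaner and is what I would present.
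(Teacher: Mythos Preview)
Your proof is correct and essentially identical to the paper's own argument: both use submultiplicativity of $\|\cdot\|_\infty$ to get $\|P\|_\infty \le \gamma^t$, invoke the standard bound $\|A^\tr\|_\infty \le n\|A\|_\infty$ (equivalently $\|A\|_1 \le n\|A\|_\infty$) to handle the transposed factor, and then combine. One small remark: the inequality $\|P\|_1 \le n\|P\|_\infty$ holds for arbitrary matrices via absolute values, so nonnegativity is not actually needed at that step.
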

\begin{proof}
  This follows from two facts about the infinity norm.  First,
  $\|\cdot\|_\infty$ is submultiplicative, meaning for all $A,B \in
  \R^{n\times n}$, $\|AB\|_\infty \leq \|A\|_\infty \|B\|_\infty$.
  Second, for all $A$ we have $\|A\|_\infty \leq n\|A^\tr\|_\infty$.
  These two combined give us
  \begin{align*}
     \left\|\prod_{i=1}^t P_i\right\|_\infty \leq \gamma^t \quad \text{and} \quad
     \left\|\prod_{i=t}^i P_i^\tr\right\|_\infty \leq n\left\|\prod_{i=1}^t P_i\right\|_\infty \leq n\gamma^t,
  \end{align*}
  and the result then follows from submultiplicativity.
\end{proof}

\begin{corollary}
  \begin{align}
    \C(t) = \P^t \C(0) {\P^\tr}^t + \sum_{r=0}^{t-1}\P^r (\A^2 +
    \sigma^2P \ones\ones^\tr \P^\tr) {\P^\tr}^r.
  \end{align}
\end{corollary}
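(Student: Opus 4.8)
The plan is to read this off directly from Proposition~\ref{prop:c-closed-form}, specialized to the fixed-response setting. In that setting the weight matrices are time-independent: $\P(s) = \P$ and $\A(s) = \A$ for all $s$, so the product $\Q(r,t) = \prod_{s=r+1}^{t}\P(s)$ collapses to $\P^{t-r}$, and the matrix $\W(r) = \A(r)^2\T + \sigma^2\P(r)\ones\ones^\tr\P(r)^\tr$ is, for every $r \ge 1$, the single constant matrix $\W := \A^2\T + \sigma^2\P\ones\ones^\tr\P^\tr$, while $\W(0) = \C(0)$ by definition. I would also note that Eq.~\eqref{eq:recurrence}, on which Proposition~\ref{prop:c-closed-form} rests, is valid here because the convexity condition $\A\ones + \P\ones = \ones$ makes the estimators unbiased.

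Substituting into the closed form $\C(t) = \sum_{r=0}^{t}\Q(r,t)\W(r)\Q(r,t)^\tr$ gives $\C(t) = \sum_{r=0}^{t}\P^{t-r}\W(r){\P^\tr}^{t-r}$. I would then split off the $r=0$ term, which contributes $\P^{t}\W(0){\P^\tr}^{t} = \P^{t}\C(0){\P^\tr}^{t}$ — the first term of the claimed identity. In the remaining sum over $1 \le r \le t$ the factor $\W(r)$ equals the constant $\W$, so substituting $s = t-r$ (which runs over $0,1,\dots,t-1$) turns that sum into $\sum_{s=0}^{t-1}\P^{s}\,\W\,{\P^\tr}^{s} = \sum_{s=0}^{t-1}\P^{s}\bigl(\A^2\T + \sigma^2\P\ones\ones^\tr\P^\tr\bigr){\P^\tr}^{s}$, the second term. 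Adding the two pieces proves the corollary.

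I do not expect any real obstacle: the argument is entirely bookkeeping on top of Proposition~\ref{prop:c-closed-form}. The only points needing care are the boundary index $r=0$, whose weight matrix is $\C(0)$ rather than $\W$, and the reindexing $s = t-r$; one should also read the matrix inside the sum in the displayed equation as $\A^2\T + \sigma^2\P\ones\ones^\tr\P^\tr$ (the $\T$ factor multiplying $\A^2$ being implicit here, consistent with Eq.~\eqref{eq:recurrence} and Eq.~\eqref{eq:wait-recurrence}).
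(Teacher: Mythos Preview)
Your argument is correct and is exactly the intended one: the paper states this result without proof as an ``easy corollary of Proposition~\ref{prop:c-closed-form},'' and your specialization of that proposition to constant $\P$ and $\A$, splitting off the $r=0$ term and reindexing, is precisely how that corollary is meant to be read off. Your remark that the displayed $\A^2$ should be $\A^2\T$ (consistent with Eqs.~\eqref{eq:recurrence} and~\eqref{eq:wait-recurrence}) is also correct.
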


The main theorem of this subsection is the following.
\thmFixedResponseConverges
\begin{proof}

  Let $\gamma = 1 - \max_i A_i < 1$.  Then since the entries of $\P$
  are non-negative, the absolute row sums of $\P$ are less than
  $\gamma$, so we have $\|\P\|_\infty \leq \gamma$.  Letting $\mZ =
  \A^2 + \sigma^2 \P \ones\ones^\tr \P^\tr$, we have by
  Lemma~\ref{lem:matrix-prod} that
  \begin{align*}
    \|\C(t+1) - \C(t)\|_\infty
    &= \left\|\sum_{r=0}^{t}\P^r X {\P^\tr}^r + \P^{t+1} \C(0) {\P^\tr}^{t+1}
                - \sum_{r=0}^{t-1}\P^r \mZ {\P^\tr}^r - \P^t \C(0) {\P^\tr}^t \right\|_\infty
    \\
    &= \left\|\P^t (\mZ + \P \C(0) \P^\tr - \C(0)) {\P^t}^\tr \right\|_\infty
    \\
    & \leq n\gamma^{2t} \|\mZ + \P \C(0) \P^\tr - \C(0)\|_\infty.
  \end{align*}
  Thus since $\gamma<1$, we have $\lim_{t\to\infty} \|\C(t+1) -
  \C(t)\|_\infty = 0$.  Moreover, for all $t$ we have
  \begin{align*}
    \|\C(t)\|_\infty
    &= \left\|\sum_{r=0}^{t-1}\P^r \mZ {\P^\tr}^r + \P^{t} \C(0) {\P^\tr}^{t}\right\|_\infty
    \\
    &\leq \left(\|\mZ\|_\infty + \|\C(0)\|_\infty\right)\sum_{r=0}^{t}n\gamma^{2r}
    \\
    &\leq n\frac{\|\mZ\|_\infty + \|\C(0)\|_\infty}{1-\gamma^2} < \infty,
  \end{align*}
  so clearly $\lim_{t \to \infty}\|\C(t)\|_\infty < \infty$.  Thus,
  $\lim_{t \to \infty}\C(t)$ exists, and \eqref{eq:wait-recurrence}
  follows from the recurrence in equation~\eqref{eq:recurrence}.

  To see that the choice of $\C(0)$ is immaterial, consider the
  alternate sequence $\tilde \C(t)$ resulting from another choice
  $\tilde \C(0)$ (but the same $\P$).  By definition,
  \begin{align*}
    \tilde \C(t) = \sum_{r=0}^{t-1}\P^r \mZ {\P^\tr}^r + \P^t \tilde \C(0) {\P^\tr}^t.
  \end{align*}
  By Lemma~\ref{lem:matrix-prod} we have that
  \begin{align*}
    \|\C(t) - \tilde \C(t)\|_\infty = \|\P^t (\C(0)-\tilde \C(0)) {\P^\tr}^t\|_\infty
    \leq n\gamma^{2t}\|\C(0)-\tilde \C(0)\|_\infty,
  \end{align*}
  so we clearly have $\lim_{t\to\infty}\|\C(t) - \tilde \C(t)\|_\infty =
  0$.  Thus, $\lim_{t\to\infty} \tilde \C(t) = \lim_{t\to\infty} \C(t) =
  \C$.
\end{proof}

\subsection{Non-optimality of best response steady-state}

By Theorem~\ref{thmBestResponseSteadyState} best response dynamics
converges to a unique steady state.  The next result shows that
although, in best responding, agents minimize the variance of their
estimators, in some cases they can converge to a steady state with
lower variances by an appropriate choice of a fixed response. I.e., by
cooperating the agents can achieve better results than by greedily
choosing the short-term minimum.

We consider the case of the complete graph over $n$ players where the
agents measurement errors $\tau_i$, are the same and equal $\tau = 1$,
and also the standard deviation of the state's random walk $\sigma = 1$.

Before proving the main theorem of this subsection we establish the
following lemma.
\begin{lemma}
  \label{'lemma fixed response variance'}
  Let $\C_{(n)}$ be the steady state of {\em fixed response} dynamics on the
  complete graph with $n$ agents, $\tau_i=\sigma=1$, $A_i=\alpha$ for
  all $i$ and $P_{ij} = (1-\alpha)/n$ for all $i$ and $j$. Then
  \begin{align}
    \label{'equation calc variance from fixed alpha'}
    C_{ii}(t) = \alpha^2 + \frac{(1-\alpha)^2(1 + \alpha^2/n)}{(2-\alpha)\alpha}.
  \end{align}
\end{lemma}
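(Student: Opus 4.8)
The plan is to read off the steady state directly from Theorem~\ref{thmFixedResponseConverges}, which applies here since $A_i = \alpha > 0$ for all $i$ (I will implicitly assume $0 < \alpha \le 1$ so that $\P$ has nonnegative entries). That theorem tells us the dynamics converge to a matrix $\C = \C_{(n)}$ with
\begin{equation*}
  \C = \A^2\T + \sigma^2 \P\ones\ones^\tr\P^\tr + \P\C\P^\tr .
\end{equation*}
Here $\A = \alpha\I$, $\T = \I$, $\sigma = 1$, and $\P = \tfrac{1-\alpha}{n}\ones\ones^\tr$. I would substitute these in and simplify each term: $\A^2\T = \alpha^2\I$; since $\P\ones = (1-\alpha)\ones$ we get $\sigma^2\P\ones\ones^\tr\P^\tr = (1-\alpha)^2\,\ones\ones^\tr$; and using $(\ones\ones^\tr)\C(\ones\ones^\tr) = (\ones^\tr\C\ones)\,\ones\ones^\tr$ we get $\P\C\P^\tr = \tfrac{(1-\alpha)^2}{n^2}(\ones^\tr\C\ones)\,\ones\ones^\tr$.

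Next I would observe that the resulting fixed-point equation exhibits $\C$ in the form $\alpha^2\I + b\,\ones\ones^\tr$ for a scalar $b$, so I can solve with the ansatz $\C = a\I + b\,\ones\ones^\tr$ and match coefficients (this is also forced by symmetry, as all agents are interchangeable). Matching the identity part gives $a = \alpha^2$. For $b$, set $s := \ones^\tr\C\ones = na + n^2 b = \alpha^2 n + n^2 b$; then the equation reduces to the scalar relation
\begin{equation*}
  b = (1-\alpha)^2 + \frac{(1-\alpha)^2}{n^2}\,s = (1-\alpha)^2\Bigl(1 + \frac{\alpha^2}{n}\Bigr) + (1-\alpha)^2 b .
\end{equation*}
Solving, and using $1-(1-\alpha)^2 = \alpha(2-\alpha)$, yields $b = \dfrac{(1-\alpha)^2(1+\alpha^2/n)}{\alpha(2-\alpha)}$, and since $C_{ii} = a + b$ this is exactly the claimed value.

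There is no genuine obstacle: the argument is just substitution and a one-variable linear solve. The only points needing a word of care are checking that $\alpha>0$ (so Theorem~\ref{thmFixedResponseConverges} applies and the denominator $\alpha(2-\alpha)$ is nonzero) and justifying the $a\I + b\,\ones\ones^\tr$ form of the solution — which, as noted, follows either from permutation symmetry of the setup or simply by inspection of the fixed-point equation after substituting $\P = \tfrac{1-\alpha}{n}\ones\ones^\tr$.
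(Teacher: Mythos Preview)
Your argument is correct. The paper's proof takes a slightly different route: rather than invoking the matrix fixed-point equation of Theorem~\ref{thmFixedResponseConverges}, it tracks the scalar process $Z_n(t) = \tfrac{1}{n}\sum_i Y_i(t)$, observes from symmetry that $Y_i(t+1) = (1-\alpha)Z_n(t) + \alpha M_i(t+1)$ and hence $Z_n(t+1) = (1-\alpha)Z_n(t) + \tfrac{\alpha}{n}\sum_i M_i(t+1)$, and solves the resulting one-variable recurrence for $\beta_n^\alpha := \lim_t \Var{Z_n(t)-S(t)}$; then $C_{ii}$ drops out of $Y_i(t)=\alpha M_i(t)+(1-\alpha)Z_n(t-1)$. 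Your approach and the paper's are essentially dual---your $b$ and the paper's $\beta_n^\alpha$ are related by $\beta_n^\alpha = \alpha^2/n + b$, and the scalar equations you each solve are the same up to this shift. Your version has the small advantage of being a direct application of Theorem~\ref{thmFixedResponseConverges} (so existence of the steady state comes for free), while the paper's version is a bit more self-contained and makes the probabilistic picture explicit.
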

\begin{proof}
  Let $\beta^{\alpha}_n = 1/\ones^\tr\C_{(n)}^{-1}\ones$, and let $Z_n(t) =
  \frac{1}{n}\sum_i Y_i(t)$, so that
  \begin{align*}
    \beta_n^{\alpha} = \lim_{t \to \infty} \Var{Z_n(t) - S(t)}.
  \end{align*}
  By the symmetry of the model we have that $Y_i(t+1) = (1 -
  \alpha)Z_n(t) + \alpha
  M_i(t+1)$, and so
  \begin{align*}
    Z_n(t+1) = (1 - \alpha)Z_n(t) + \alpha\frac{1}{n}\sum_i M_i(t+1).
  \end{align*}
  Therefore
  \begin{align*}
    \beta_n^{\alpha} = \Var{Z_n(t+1) - S(t+1)} =
    \frac{\alpha^2}{n}\tau^2 + (1 - \alpha)^2 (\beta^{\alpha}_n +
    \sigma^2),
  \end{align*}
  which, since $\tau=\sigma=1$, implies by simple manipulation that
  \begin{align*}
    \beta_n^{\alpha} = \frac{\alpha^2/n+(1-\alpha^2)}{1-(1-\alpha^2)}.
  \end{align*}
  Finally, because $Y_i(t) = \alpha M_i(t) + (1 - \alpha)Z_n(t-1)$ then
  \begin{align*}
    C_{ii}(t) &= \lim_{t \to \infty}\Var{Y_i(t) - S(t)} = \alpha^2 + (1
    - \alpha)^2 (\beta_n^{\alpha} + 1)\\
    &= \alpha^2 + \frac{(1-\alpha)^2(1 + \alpha^2/n)}{(2-\alpha)\alpha}.
  \end{align*}
\end{proof}

\thmBestResponseNonOptimality
\begin{proof}
  Let $n=2$, let $G$ be the complete graph on two vertices, let
  $\sigma=1$ and let $\tau_i=1$ for all $i \in [n]$. Let $A_i=\alpha$
  for all $i$ and $P_{ij} = (1-\alpha)/n$ for all $i$ and $j$. Then by
  Lemma~\ref{'lemma fixed response variance'} we have that
  \begin{align*}
    C_{ii}^{fr} &= \alpha^2 + \frac{(1-\alpha)^2(1 +
      \alpha^2/2)}{(2-\alpha)\alpha}.
  \end{align*}
  for all $i$.

  By Eq.~\eqref{eq:steady-state-cubic} we have that
  \begin{align*}
    C_{ii}^{br} = 2-\sqrt{2} \approx 0.58578,
  \end{align*}
  for all $i$.

  It is easy to verify that for $\alpha=0.60352$, for example (which
  is in fact the minimum), it holds that $C_{ii}^{fr} \approx
  0.58472$ and so $C_{ii}^{br} > C_{ii}^{fr}$.
\end{proof}

We note that the choice $n=2$ was made to make the proof above simple,
rather than being a pathological example; we now show that the same
holds for $n$ large enough.
\begin{lemma}
  \label{'lemma inf best response'}
  Let $\C^{br,(n)}$ be the steady state of
  {\em best response} dynamics on the complete graph with $n$ agents
  and $\tau_i=\sigma=1$. Then
  \begin{align*}
    \lim_{n \to \infty}C^{br,n}_{ii} = \frac{\beta +1}{\beta +2},
    \quad\quad\mathrm{  where }\quad\quad
    \beta = {\tfrac 2 3} \sqrt{7} \cos\left({\tfrac 1 3} \tan^{-1}(3
      \sqrt{3})\right)-{\tfrac 4 3}.
  \end{align*}
\end{lemma}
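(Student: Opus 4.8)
The plan is to specialize the single-parameter description of best response on the complete graph (Section~\ref{section:understanding-best-response}) to the case $\tau_i=\sigma=1$, to solve for the steady state explicitly as a function of $n$, and then to let $n\to\infty$. First I would specialize Lemma~\ref{lem:clique-f}: with $\tau_i=\sigma=1$ one has $y=\sum_i\tau_i^2=n$ and $z=(\sum_i\tau_i^2)(\sum_i\tau_i^{-2})=n^2$, and substituting these (and writing $x:=\beta+1$) the denominator collapses to the perfect square $n^2(1+x)^2$, so the map reduces to $f(\beta)=\frac{x(n+x)}{n(1+x)^2}$. By Theorem~\ref{thmBestResponseSteadyState} the dynamics converge to the unique positive fixed point $\beta^*_n$; writing $x^*_n:=\beta^*_n+1$, the equation $\beta^*_n=f(\beta^*_n)$ becomes $n x^3+(n-1)x^2-2nx-n=0$ after clearing denominators, and $x^*_n$ is its unique root greater than $1$ (the polynomial is negative at $x=1$, tends to $+\infty$, and has a single positive critical point). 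As a check, for $n=2$ this cubic factors as $(2x+1)(x^2-2)$, so $x^*_2=\sqrt2$, which recovers $C^{br,(n)}_{ii}=2-\sqrt2$.

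Next I would read off the steady-state variance and pass to the limit. At the fixed point, Eq.~\eqref{eq:clique-c} gives $C^{br,(n)}_{ii}=(A^*)^2\tau_i^2+(\beta^*_n+\sigma^2)(1-A^*)^2$, where by Eq.~\eqref{eq:alpha} the common coefficient is $A^*=\frac{\beta^*_n+1}{\beta^*_n+2}=\frac{x^*_n}{1+x^*_n}$; substituting and simplifying the numerator to $x^*_n(1+x^*_n)$ yields the clean identity $C^{br,(n)}_{ii}=\frac{x^*_n}{1+x^*_n}$. Dividing the cubic by $n$, the $x^*_n$ are the positive roots of $p_n(x):=x^3+(1-\tfrac1n)x^2-2x-1$, which converge coefficientwise to $p_\infty(x):=x^3+x^2-2x-1$; since $p_\infty$ has a unique positive root $x^*_\infty$ and it is simple (the other two roots are negative), $x^*_n\to x^*_\infty$ and therefore $\lim_{n\to\infty}C^{br,(n)}_{ii}=\frac{x^*_\infty}{1+x^*_\infty}$.

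Finally I would solve $p_\infty$ in closed form: the shift $x=u-\tfrac13$ depresses it to $u^3-\tfrac73 u-\tfrac{7}{27}=0$, and the trigonometric (Vi\`ete) solution gives the relevant root $u=\tfrac23\sqrt7\cos\!\bigl(\tfrac13\psi\bigr)$ with $\psi:=\arccos\tfrac1{2\sqrt7}$. Since $\psi\in(0,\tfrac\pi2)$ and $\sin\psi=\sqrt{1-\tfrac1{28}}=\tfrac{3\sqrt3}{2\sqrt7}$, we have $\tan\psi=3\sqrt3$, i.e.\ $\psi=\tan^{-1}(3\sqrt3)$, and one checks this is exactly the branch producing the positive root $x^*_\infty=u-\tfrac13$. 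Putting $\beta:=x^*_\infty-1=\tfrac23\sqrt7\cos\!\bigl(\tfrac13\tan^{-1}(3\sqrt3)\bigr)-\tfrac43$ --- the constant in the statement --- we conclude $\lim_{n\to\infty}C^{br,(n)}_{ii}=\frac{x^*_\infty}{1+x^*_\infty}=\frac{\beta+1}{\beta+2}$. The arithmetic is routine throughout; the real content --- and the part I would be most careful about --- is recognizing and solving the limiting cubic $x^3+x^2-2x-1$ (the minimal polynomial of $2\cos(2\pi/7)$) and verifying that the trigonometric branch yielding a positive root matches the principal value $\tan^{-1}(3\sqrt3)$. The one genuine (if mild) technical point is that simplicity of $x^*_\infty$ is what justifies the convergence $x^*_n\to x^*_\infty$.
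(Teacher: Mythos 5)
Your proposal is correct, and it is essentially the route the paper intends (the paper omits the proof, saying only that the lemma follows by substitution into, and solution of, the cubic of Eq.~\eqref{eq:steady-state-cubic} together with Eq.~\eqref{eq:alpha}): specialize to $\tau_i=\sigma=1$, find the steady state $\beta^*_n$, let $n\to\infty$, solve the limiting cubic trigonometrically, and convert to $C_{ii}$ via $A=(\beta+1)/(\beta+2)$ and Eq.~\eqref{eq:clique-c}, exactly as you do. The one noteworthy difference is that you re-derive the fixed-point equation directly from Lemma~\ref{lem:clique-f} (with $y=n$, $z=n^2$, $x=\beta+1$ one gets $f(\beta)=x(n+x)/(n(1+x)^2)$ and hence $nx^3+(n-1)x^2-2nx-n=0$) instead of specializing the displayed Eq.~\eqref{eq:steady-state-cubic}; this is in fact the safer path, because the $\beta^2$ coefficient of Eq.~\eqref{eq:steady-state-cubic} appears to carry a sign error --- with $y=n$, $z=n^2$, $\sigma=1$ it would yield the limiting equation $\beta^3-4\beta^2+3\beta-1=0$, which is inconsistent both with the paper's own value $\approx 0.55496$ and with $C^{br}_{ii}=2-\sqrt{2}$ at $n=2$ --- whereas your cubic (equivalently $\beta^3+4\beta^2+3\beta-1=0$ in the limit) reproduces both and has the stated constant $\beta=\tfrac23\sqrt7\cos\bigl(\tfrac13\tan^{-1}(3\sqrt3)\bigr)-\tfrac43$ as its unique positive root. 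Your remaining steps check out: $C_{ii}=x^*_n/(1+x^*_n)$ from Eqs.~\eqref{eq:alpha} and \eqref{eq:clique-c}, the $n=2$ sanity check $x^*_2=\sqrt2$, the Vi\`ete branch identification via $\cos\psi=1/(2\sqrt7)$, $\tan\psi=3\sqrt3$, and the passage $x^*_n\to x^*_\infty=2\cos(2\pi/7)$ justified by simplicity of that root.
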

We omit the proof and mention that it follows directly by substitution
into, and solution of, the cubic polynomial of
Eq.~\eqref{eq:steady-state-cubic}, and Eq.~\eqref{eq:alpha}. We
likewise omit the proof of the following lemma, which is an immediate
corollary of Lemma~\ref{'lemma fixed response variance'} above.
\begin{lemma} \label{lemma best response suboptimal for large n}
  Let $\C^{fr,n,\alpha}$ be the steady state of {\em fixed response}
  dynamics on the complete graph with $n$ agents, $\tau_i=\sigma=1$,
  $A_i=\alpha$ for all $i$, and $P_{ij} = (1-\alpha)/n$ for all $i$
  and $j$. Then
  \begin{align*}
    \lim_{n \to \infty}C^{fr,n,\alpha}_{ii} =
    \alpha^2+\frac{1}{\alpha(2-\alpha)}-1.
  \end{align*}
\end{lemma}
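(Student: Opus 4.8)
The plan is to derive this directly from Lemma~\ref{'lemma fixed response variance'}, which already computes the steady-state diagonal entry of the covariance matrix under exactly these dynamics and parameters: for the complete graph with $n$ agents, $\tau_i=\sigma=1$, $A_i=\alpha$, and $P_{ij}=(1-\alpha)/n$, it gives
\begin{align*}
  C^{fr,n,\alpha}_{ii} = \alpha^2 + \frac{(1-\alpha)^2\bigl(1 + \alpha^2/n\bigr)}{(2-\alpha)\alpha}.
\end{align*}
So the lemma is an algebraic consequence of this one formula.

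First I would observe that the entire $n$-dependence of the right-hand side is confined to the factor $\bigl(1 + \alpha^2/n\bigr)$; everything else is a fixed rational function of $\alpha$ with no pole for $\alpha \in (0,1)$. Since $\alpha^2/n \to 0$ as $n \to \infty$, the limit can be taken termwise, yielding
\begin{align*}
  \lim_{n\to\infty} C^{fr,n,\alpha}_{ii} = \alpha^2 + \frac{(1-\alpha)^2}{(2-\alpha)\alpha}.
\end{align*}

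Second, I would simplify the fraction to match the stated form. Writing $(1-\alpha)^2 = 1 - 2\alpha + \alpha^2$ and noting $2\alpha - \alpha^2 = \alpha(2-\alpha)$, we get $\frac{(1-\alpha)^2}{\alpha(2-\alpha)} = \frac{1 - \alpha(2-\alpha)}{\alpha(2-\alpha)} = \frac{1}{\alpha(2-\alpha)} - 1$, and hence $\lim_{n\to\infty} C^{fr,n,\alpha}_{ii} = \alpha^2 + \frac{1}{\alpha(2-\alpha)} - 1$, as claimed.

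There is essentially no obstacle here — the result is, as the text says, an immediate corollary. The only points that deserve a line of justification are that the limit may be taken termwise (immediate, as the expression is continuous in $1/n$ at $1/n = 0$ for $\alpha \in (0,1)$) and the one-step algebraic identity in the last paragraph. I would therefore present the argument as a short two-line computation rather than anything requiring a genuinely new idea.
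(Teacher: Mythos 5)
Your proof is correct and follows exactly the route the paper intends: the paper omits the proof, noting only that the lemma is an immediate corollary of Lemma~\ref{'lemma fixed response variance'}, and your termwise limit in $1/n$ followed by the identity $(1-\alpha)^2 = 1-\alpha(2-\alpha)$ is precisely that computation.
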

Hence setting $\alpha_{\infty} = 0.59075$ (again the minimum) we get
\begin{align*}
  \lim_{n \to \infty}C^{fr,n,\alpha_\infty}_{ii} \approx 0.55017,
\end{align*}
and using Lemma~\ref{'lemma inf best response'} it is easy to numerically
verify that
\begin{align*}
  \lim_{n \to \infty}C^{br,n}_{ii}  \approx 0.55496.
\end{align*}
Thus we have shown that for large enough $n$ it again holds that
$C_{ii}^{br} > C_{ii}^{fr}$ on the complete graph, for the appropriate
choice of parameters. We conjecture that this is in fact achievable
for all $n>1$, with the correct choice of parameters. In the case of
$n = 1$ best response is optimal by Proposition~\ref{prop:kalman-1}.

\subsection{Socially asymptotic learning}

Recall that in the complete graph setting with fixed $\sigma$ and
$\tau$ and $\tau_i < \tau$ for all $i \in [n]$ we say that a dynamics
is socially asymptotically learning if the variance of each agent's
estimator approaches $\frac{\sigma^2 \tau_i^2}{\sigma^2 + \tau_i^2}$
as the number of agents increases.

Surprisingly, no fixed response dynamics can achieve socially
asymptotic learning unless either $\sigma = 0$ or $\tau = 0$.  This,
of course, includes the steady state of the best response dynamics.
In the case that $\sigma = 0$ the value of $S(t)$ is constant over
time, and we are in the DeGroot model which is known to converge.  In
the case that $\tau = 0$ each agent receives the exact value of $S(t)$
in each round and can simply set $Y_i(t) = M_i(t)$ to asymptotically
learn.

\thmNoAysmptoticLearningInFixedResponse

\begin{proof}
  For the sake of contradiction, first assume that there exists some
  fixed response scheme that permits socially asymptotic learning so
  that in the limit $C_{ii} = \frac{\sigma^2 \tau_i^2}{\sigma^2 +
    \tau_i^2}$ for all $i \in [n]$.  We analyze
  Equation~\ref{eq:recurrence} which states that: $\C(t) = \A(t)^2 \T
  + \sigma^2\P(t)\ones\ones^\tr\P(t)^\tr + \P(t)\C(t-1)\P(t)^\tr $.

  It is easy to see that each of the three terms on the right hand
  side of Equation~\ref{eq:recurrence} is a positive semidefinite
  matrix and thus will have non-negative diagonal entries.  By our
  assumption that we have a socially asymptotic learner, we know that
  the sum of the $i, i$ entries in the three matrices on the right
  hand side of Equation~\ref{eq:recurrence} equals $\frac{\sigma^2
    \tau_i^2}{\sigma^2 + \tau_i^2}$ in the limit.  Now,
  $(\P(t)\ones\ones^\tr\P(t)^\tr)_{i, i} = \sum_{j, k}p_{ij} p_{ik} =
  \sum_{j}p_{ij}\sum_{k} p_{ik} = (1 - \alpha_i)^2$.  By fixing
  $\epsilon_i$ so that $\alpha_i = \frac{\sigma^2 + \epsilon_i}{\sigma^2
    + \tau_i^2}$ we see that in the limit
  \begin{align}
     \frac{\sigma^2 \tau_i^2}{\sigma^2 + \tau_i^2} = C_{ii} & \leq (\A(t)^2 \T)_{i, i} + (\sigma^2\P(t)\ones\ones^\tr\P(t)^\tr)_{i, i} + (\P(t)\C(t-1)\P(t)^\tr )_{i, i} \nonumber \\
      & = \alpha_i^2 \tau_i^2 + (1 - \alpha_i)^2\sigma^2 + (\P(t)\C(t-1)\P(t)^\tr )_{i, i} \nonumber \\
     & = \frac{\sigma^2
    \tau_i^2}{\sigma^2 + \tau_i^2} +  \frac{\epsilon_i^2}{\sigma^2 + \tau_i^2} + (\P(t)\C(t-1)\P(t)^\tr )_{i, i} \end{align} and we have that in the limit for all $i \in [n]$: $\epsilon_i$ goes to 0, $(\P(t)\C(t-1)\P(t)^\tr )_{i, i}$ goes to 0, and $\alpha_i$ goes to $\frac{\sigma^2 }{\sigma^2 + \tau_i^2}$.

    We will now show that it cannot be that both $\alpha_i$ goes to $\frac{\sigma^2 }{\sigma^2 + \tau_i^2}$ and that $(\P(t)\C(t-1)\P(t)^\tr )_{i, i}$ goes to 0.  Another application of Equation~\ref{eq:recurrence} yields that \begin{align*} (\P(t)\C(t-1)\P(t)^\tr )_{i, i}  =\;& (\P(t)\A(t-1)^2 \P(t)^\tr)_{i, i} + \sigma^2(\P(t)\P(t-1)\ones\ones^\tr\P(t-1)^\tr\P(t)^\tr)_{i, i} \\&\quad + (\P(t)\P(t-1)\C(t-2)\P(t-1)^\tr\P(t)^\tr)_{i,i}\end{align*} and again all the matrices on the right hand side are positive semi-definite and thus have non-negative diagonals.  Thus:

  \begin{align*}
    (\P(t)\C(t-1)\P(t)^\tr )_{i, i}  & \geq \sigma^2(\P(t)\P(t-1)\ones\ones^\tr\P(t-1)^\tr\P(t)^\tr)_{i,i} \nonumber \\
    & = \sigma^2 \sum_{j, k, l, m} p_{ij}p_{jk}p_{il}p_{lm} = \sigma^2
    \left( \sum_{j} p_{ij} \sum_{k} p_{jk} \right)^2 = \sigma^2 \left(
      \sum_{j} p_{ij} (1 - \alpha_j) \right)^2 \nonumber \\ &\geq
    \sigma^2(1 - \alpha_i)^2 \min_j (1 - \alpha_j)^2 \geq
    \frac{\sigma^2\tau^8}{(\sigma^2 + \tau^2)^4} > 0,
  \end{align*}
   and this is a contradiction because we saw that $(\P(t)\C(t-1)\P(t)^\tr )_{i, i}$ limited to 0.  The penultimate inequality is because $\alpha_i$ limits to  $\frac{\sigma^2 }{\sigma^2 + \tau_i^2}$ and because $\tau_i \leq \tau$ for all $i$.
\end{proof}

\section{Penultimate prediction dynamics}
\label{'section memory 1'}

In this section we consider players that can remember one value from
the previous round.  We show that, in the case of the complete graph,
this allows the agents to learn substantially more efficiently than in
the previous models. Also on the complete graph we show that this
model features perfect learning.  In general, it is not clear that the
optimal strategy involving one remembered value must necessarily have
this property.

We call this model the penultimate prediction model because an agent
using it is effectively trying to re-estimate the value of the
underlying state in the previous round, disregarding its own new
measurement from the current round.  This may help discount the older
information that contributed to the prediction of each neighbor in the
previous round.

As defined in Section~\ref{'section-models'}, in this model each agent
$i$ does the following at time $t$: (a) agent $i$ first picks
$A_i$ and $\{P_{ij}\}_j$ that minimize $\Var{R_i(t)-S(t-1)}$
where $\br(t) = \A \cdot \br(t-1)+\P \cdot \bfy(t-1)$, and (b) agent $i$
then chooses $k_i(t)$ that minimizes $\Var{Y_i(t)-S(t)}$ where
$Y_i(t) = k_i(t) M_i(t) + (1 - k_i(t)) R_i(t)$.

 \subsection{Complete graph case}

 We show that the penultimate prediction model achieves perfect
 learning on the complete graph.  This means that agent $i$ learns
 $S(t)$ as if she had access to every agent's measurements from all the previous rounds, rather than just her neighbors' estimators from the last round.

\thmPenultimatePerfect

\begin{proof}
  Let
  \begin{align*}
   \tau_{*} = \left(\sum_{i \in
      [n]} \tau_i^{-2} \right)^{-\frac{1}{2}},
  \end{align*}
  let $q_i= \tau^2_{*}/\tau_i^2$ and let $\bar{M}(t) =
  \sum_iq_iM_i(t)$ be the average of the measurements of the agents at
  time $t$, weighted by the inverse of their variance. Then $\bar M(t)$
  is the MVULE of $S(t)$ given $\M(t)$ (see Corollary~\ref{'corollary
    weight is inverse of variance'}).

  Let $E(t)$ be the MVULE of $S(t)$, given {\em all that is known up
    to time $t$}: $\{\M(s)|s \leq t\}$, together with $\bfy(0)$.  Let
  $V(t) = \Var{E(t) - S(t)}$.

  Basic Kalman filter theory (see, e.g. \cite{kalman1960new}) shows
  that $E(t)$ can be written as
  \begin{align}
    \label{eq:E-kalman}
    E(t) = (1 - K(t))E(t-1) + K(t)\bar{M}(t),
  \end{align}
  with
  \begin{align*}
    V(t+1) = (V(t) + \sigma^2)(1 - K(t))
  \end{align*}
  and $K(t) = \frac{V(t) + \sigma^2}{V(t) + \sigma^2 + \tau_{*}^2}$.
  Note that $V(t)$ is deterministic.


  We now prove by induction that $R_i(t) = E(t-1)$.  The base case of
  $t = 1$ follows from definitions.

  By our inductive hypothesis at step $t$ we have that $R_i(t) =
  E(t-1)$ and $\Var{R_i(t) - S(t-1)} = V(t-1)$. Hence $R_i(t)$ is
  identical for all agents and we can write $R(t) = R_i(t) = E(t-1)$.
  Because $R(t) - S(t-1)$ and $S(t) - S(t-1)$ are independent  we have
  that
  \begin{align*}
    \Var{R(t)-S(t)} = \Var{R(t)-S(t-1)} + \sigma^2 = V(t) +
    \sigma^2.
  \end{align*}
  Since $R(t) - S(t)$ and $M_i(t) - S(t)$ are independent,
  $\Var{M_i(t) - S(t)} = \tau_i^2$, and since $Y_i(t)$ is the MVULE of $S(t)$ given $R(t)$ and $M_i(t)$, then by
  Corollary~\ref{'corollary combining two independent values'},
  $Y_i(t)$ will satisfy
  \begin{align}
    \label{eq:y-r}
    Y_i(t) = k_i(t) R(t) + (1-k_i(t)) M_i(t),
  \end{align}
  where $k_i(t) = \frac{\tau_i^2}{V(t-1) + \sigma^2 + \tau_i^2}$ is also
  deterministic. Thus $Y_i(t)$ is a deterministic function of $R(t)$
  and $M_i(t)$, and more importantly $M_i(t)$ is a deterministic
  linear combination of $Y_i(t)$ and $R(t)$. Since $R(t+1)$ is the
  MVULE of $S(t)$ given $R(t)$ and $\bfy(t)$, its variance
  is bounded from above by MVULE of $S(t)$ given $R(t)$ and
  $\M(t)$. But by Eq.~\ref{eq:E-kalman} we have that the optimum is
  achieved by $E(t)$, and so $R(t+1) = E(t)$.

  We have therefore established that $Y_i(t)$ is the MVULE of $S(t)$
  given $E(t-1)$ and $M_i(t)$, where $E(t-1)$ is the MVULE of $S(t-1)$
  given {\em all the measurements} up to time $t-1$. To complete the
  proof we note that by the Markov property of $S(t)$ this means that
  $Y_i(t)$ is the MVULE of $S(t)$ given all the measurements up to
  time $t$, together with $M_i(t)$.

\end{proof}

 \begin{corollary} \label{'corollary penultimate is asymptotic'}

   In the complete graph case, for any fixed $\sigma$ and $\tau$ where
   $\tau_i \leq \tau$ for all $i \in [n]$, the penultimate prediction
   heuristic is a socially asymptotic learner.
 \end{corollary}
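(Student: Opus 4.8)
The plan is to extract the limiting variance directly from the proof of Theorem~\ref{thmPenultimatePerfect} and then send $n \to \infty$. That proof showed that on the complete graph $Y_i(t) = k_i(t) R(t) + (1 - k_i(t)) M_i(t)$, where $R(t) = E(t-1)$ is the Kalman estimate of $S(t-1)$ from all past measurements, that $R(t) - S(t)$ and $M_i(t) - S(t)$ are independent, and that $\Var{R(t) - S(t)} = V(t-1) + \sigma^2$ with $V(t) = \Var{E(t) - S(t)}$. By Corollary~\ref{'corollary combining two independent values'} this yields
\begin{align*}
  \Var{Y_i(t) - S(t)} = \frac{(V(t-1) + \sigma^2)\,\tau_i^2}{V(t-1) + \sigma^2 + \tau_i^2},
\end{align*}
and, using the same independence together with $\Cov{D_i(t), D_j(t)} = 0$, one gets $\Cov{Y_i(t) - S(t), Y_j(t) - S(t)} = k_i(t) k_j(t)(V(t-1) + \sigma^2)$ for $i \ne j$. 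Since the gains $k_i(t)$ also depend only on $V(t-1)$, $\sigma$ and $\tau_i$, it suffices to show that $V(t)$ converges as $t \to \infty$; then the whole matrix $\C(t)$ converges, the steady state $\C$ exists, and its diagonal is governed by $V^{*} := \lim_{t \to \infty} V(t)$.

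I would then analyze the scalar Kalman recursion. From the formulas in the proof of Theorem~\ref{thmPenultimatePerfect}, $V(t+1) = \phi(V(t))$ with $\phi(v) = (v + \sigma^2)\tau_{*}^2 / (v + \sigma^2 + \tau_{*}^2)$ and $\tau_{*}^2 = (\sum_{i} \tau_i^{-2})^{-1}$. The map $\phi$ is increasing on $[0,\infty)$ and satisfies $\phi(v) < \tau_{*}^2$ everywhere, so $V(t)$ is eventually monotone and bounded, hence converges to the unique nonnegative fixed point $V^{*}$, which solves $V^{*2} + \sigma^2 V^{*} - \sigma^2 \tau_{*}^2 = 0$ (equivalently, this is the standard steady state of the Kalman filter). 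From the identity $V^{*2} = \sigma^2(\tau_{*}^2 - V^{*})$ we get $V^{*} \le \tau_{*}^2$, and since $\tau_i \le \tau$ for all $i$ we have $\tau_{*}^2 \le \tau^2 / n$. Hence $V^{*} \to 0$ as $n \to \infty$, with a bound uniform in the (growing) index set $[n]$.

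Combining the two steps finishes the argument: for each fixed $n$ the steady state satisfies $C_{ii} = (V^{*} + \sigma^2)\tau_i^2 / (V^{*} + \sigma^2 + \tau_i^2)$, and since $V^{*} \to 0$ this tends to $\sigma^2 \tau_i^2 / (\sigma^2 + \tau_i^2)$ for every $i$ (indeed the difference is bounded by a quantity of order $\tau^2/n$ independent of $i$), which is precisely the definition of socially asymptotic learning. I do not expect a genuine obstacle: the only two non-cosmetic points are the convergence $V(t) \to V^{*}$ (handled by monotone convergence, or by quoting Kalman filter theory) and the elementary estimate $V^{*} \le \tau_{*}^2 \le \tau^2 / n$; everything else is substitution into formulas already established in the proof of Theorem~\ref{thmPenultimatePerfect}. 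The one point worth stating carefully is that $V^{*}$, and hence $\C$, depends on $n$, so the limit in question is over the sequence of per-$n$ steady states — in contrast with the impossibility result of Theorem~\ref{thmNoAysmptoticLearningInFixedResponse}.
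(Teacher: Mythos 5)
Your argument is correct, but it takes a genuinely different route from the paper's. The paper, after invoking Theorem~\ref{thmPenultimatePerfect}, never touches the Kalman recursion: it upper-bounds the perfect learner's variance by that of an explicit \emph{suboptimal} estimator which uses only the previous round's measurements $\M(t-1)$ to estimate $S(t-1)$ (variance $\tau_*^2 \leq \tau^2/n$, by Corollary~\ref{'corollary weight is inverse of variance'}) and then mixes that with $M_i(t)$ using the \emph{fixed} weights $\sigma^2/(\sigma^2+\tau_i^2)$ and $\tau_i^2/(\sigma^2+\tau_i^2)$; this gives $\Var{Y_i(t)-S(t)} \leq \frac{\sigma^2\tau_i^2}{\sigma^2+\tau_i^2} + \frac{\tau_i^6}{n(\sigma^2+\tau_i^2)^2}$ for every round, and the conclusion follows by letting $n\to\infty$. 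You instead analyze the scalar Kalman map $\phi(v)=(v+\sigma^2)\tau_*^2/(v+\sigma^2+\tau_*^2)$, show $V(t)\to V^*$ with $V^*\leq \tau_*^2\leq \tau^2/n$, and read off the exact limiting covariance $C_{ii}=(V^*+\sigma^2)\tau_i^2/(V^*+\sigma^2+\tau_i^2)$. What your route buys: it explicitly establishes that the steady state $\lim_{t\to\infty}\C(t)$ exists (which the definition of socially asymptotic learning formally requires and which the paper's short proof leaves implicit), and it yields the exact limit rather than only an upper bound, with the matching lower bound built in; the paper instead leans on the information-theoretic lower bound $\sigma^2\tau_i^2/(\sigma^2+\tau_i^2)$ stated before the definition. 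What the paper's route buys: brevity --- once perfect learning is known, a one-step comparison estimator suffices and no fixed-point or monotonicity analysis is needed. Two small points to tighten in your write-up: justify ``eventually monotone'' by noting that the orbit of an increasing map is monotone from the first step on (direction determined by the sign of $V(1)-V(0)$), and note that $V(0)$ is finite because the initial estimators have finite variance, so the recursion is well posed from the start.
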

 \begin{proof}
   By Theorem~\ref{thmPenultimatePerfect} we need only show that the
   optimal learner is socially asymptotic.  Fix some agent $i$.  At
   round $t$ if this agent is given $\M(t-1)$, then, by
   Corollary~\ref{'corollary weight is inverse of variance'}, he can
   predict $S(t-1)$ with $R_i(t)$ such that $\Var{R_i(t) - S(t-1)} =
   \tau_{*}^2 = \left(\sum_{i \in [n]} \tau_i^{-2} \right)^{-1}$.
   However, note that $\tau_*^2 \leq \tau^2/n$.  Agent $i$ can then compute
   $Y_i(t) = \frac{\sigma^2}{\sigma^2 + \tau_i^2}M_i(t) +
   \frac{\tau_i^2}{\sigma^2 + \tau_i^2}R_i(t)$ and it is easy to see
   that $\Var{Y_i(t) - S(t)} \leq \frac{\sigma^2 \tau_i^2}{\sigma^2 +
     \tau_i^2} + \frac{\tau_i^6}{n(\sigma^2 + \tau_i^2)^2}$, which
   approaches $\frac{\sigma^2 \tau_i^2}{\sigma^2 + \tau_i^2}$ as $n$
   grows.
 \end{proof}

\section{Conclusion}
This work can be seen as a study of natural extensions of the DeGroot
model to the setting where the value to be learned changes over time.
The most direct extension is the {\em fixed response} model.  Here we
show that while the estimate will keep moving with the true values,
its variance will converge to a fixed value.  However, in contrast to
the DeGroot model, the agents are continually receiving new
independent signals, and so have a reference point from which to
evaluate the validity of their neighbors' signals. This leads us to
propose the {\em best response} model.  We show that in the case of
the complete graph, best response dynamics will always converge to a
particular fixed response that is (myopically) optimal.  However, we
also show that it is not necessarily Pareto optimal amongst all fixed
responses.  Finally, we show that a simple strengthening of the model
to allow agents to remember one value can, in certain cases, lead to
much improved performance.  This can be seen not only as a critique of
fixed response dynamics as being too weak to capture natural dynamics,
but also as an interesting model to be studied more in its own right.

\bibliographystyle{abbrv}
\bibliography{bib}

\end{document}